\newcommand{\msize}[1]{{\left|#1\right|}}
\newcommand{\dist}{{\rm dist}}
\newcommand{\calC}{{\cal C}}
\newcommand{\Col}{{\mbox{\sc Col}}}
\newcommand{\mA}{\alpha}
\newcommand{\mAi}{\alpha^{-1}}
\newcommand{\mB}{\beta}
\newcommand{\mBi}{\beta^{-1}}
\newcommand{\mCo}{\gamma_1}
\newcommand{\mCoi}{\gamma_1^{-1}}
\newcommand{\mCt}{\gamma_2}
\newcommand{\mCti}{\gamma_2^{-1}}
\newcommand{\mDo}{\delta_1}
\newcommand{\mDt}{\delta_2}
\newcommand{\mT}{\tau}
\title{Cyclic Shift Problems on Graphs\thanks{This work is partially supported by KAKENHI grant numbers 17H06287 and 18H04091.}}
\author{Kwon Kham Sai
\and 
Ryuhei Uehara
\and 
Giovanni Viglietta
}
\institute{School of Information Science, 
Japan Advanced Institute of Science and Technology (JAIST), Japan.
\email{\{saikwonkham,uehara,johnny\}@jaist.ac.jp}}
\begin{document}
\maketitle

\begin{abstract}
We study a new reconfiguration problem inspired by classic mechanical puzzles: a colored token is placed on each vertex of a given graph; we are also given a set of distinguished cycles on the graph. We are tasked with rearranging the tokens from a given initial configuration to a final one by using cyclic shift operations along the distinguished cycles. We first investigate a large class of graphs, which generalizes several classic puzzles, and we give a characterization of which final configurations can be reached from a given initial configuration. Our proofs are constructive, and yield efficient methods for shifting tokens to reach the desired configurations. On the other hand, when the goal is to find a shortest sequence of shifting operations, we show that the problem is NP-hard, even for puzzles with tokens of only two different colors.
\keywords{cyclic shift puzzle, permutation group, NP-hard problem}
\end{abstract}

\section{Introduction}
\label{sec:1}
Recently, variations of reconfiguration problems have been attracting much interest, and 
several of them are being studied as important fundamental problems in theoretical computer science~\cite{nishimura}. Also, many real puzzles which can be modeled as reconfiguration problems have been invented and proposed by the puzzle community, such as the 15-puzzle and Rubik's cube. Among these, we focus on a popular type of puzzle based on cyclic shift operations: see~\cref{fig:puzzle}. In these puzzles, we can shift some elements along predefined cycles as a basic operation, and the goal is to rearrange the pieces into a desired pattern.

\begin{figure}
\centering
\includegraphics[height=3.75cm]{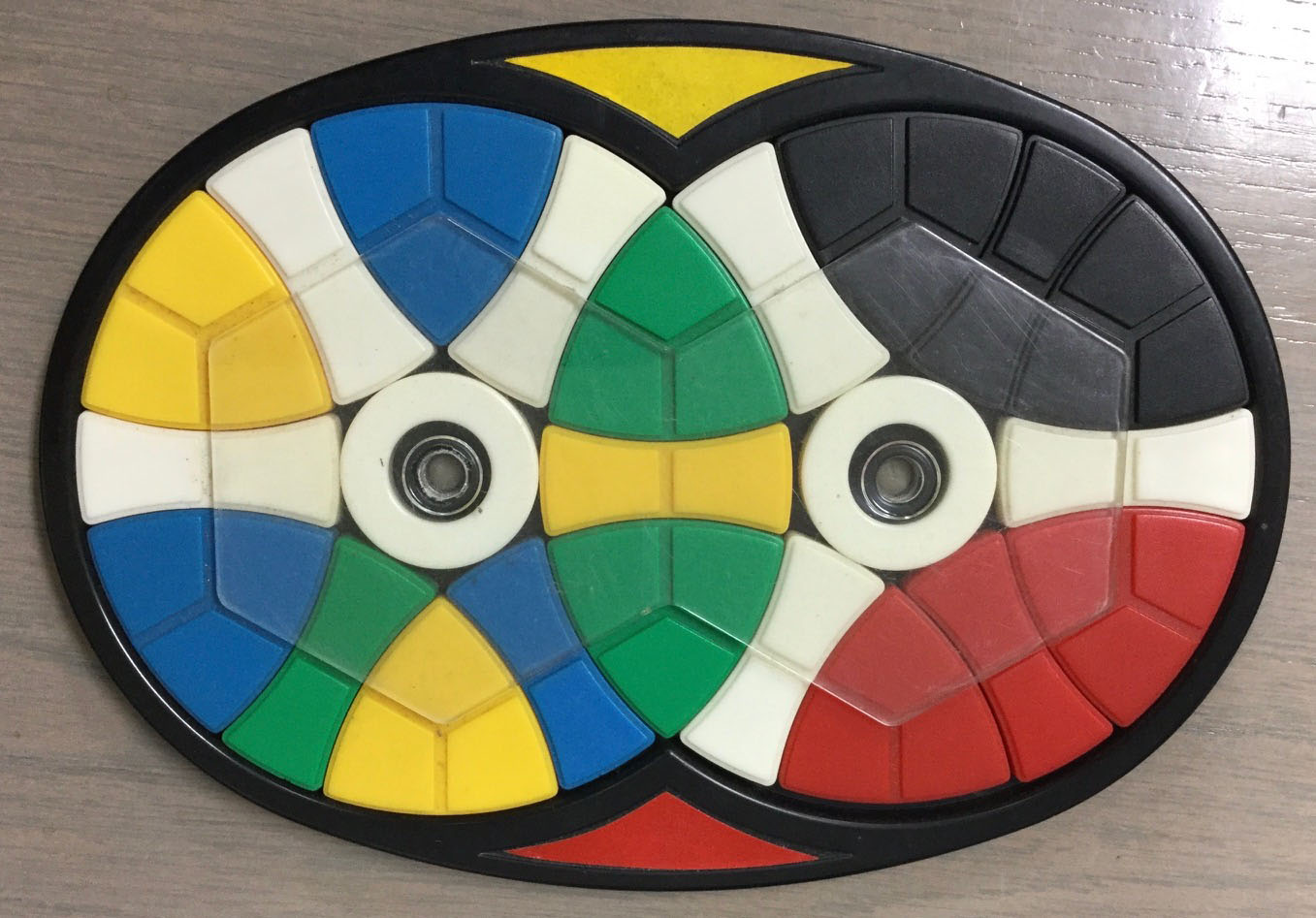}
\hspace{5mm}
\includegraphics[height=3.75cm]{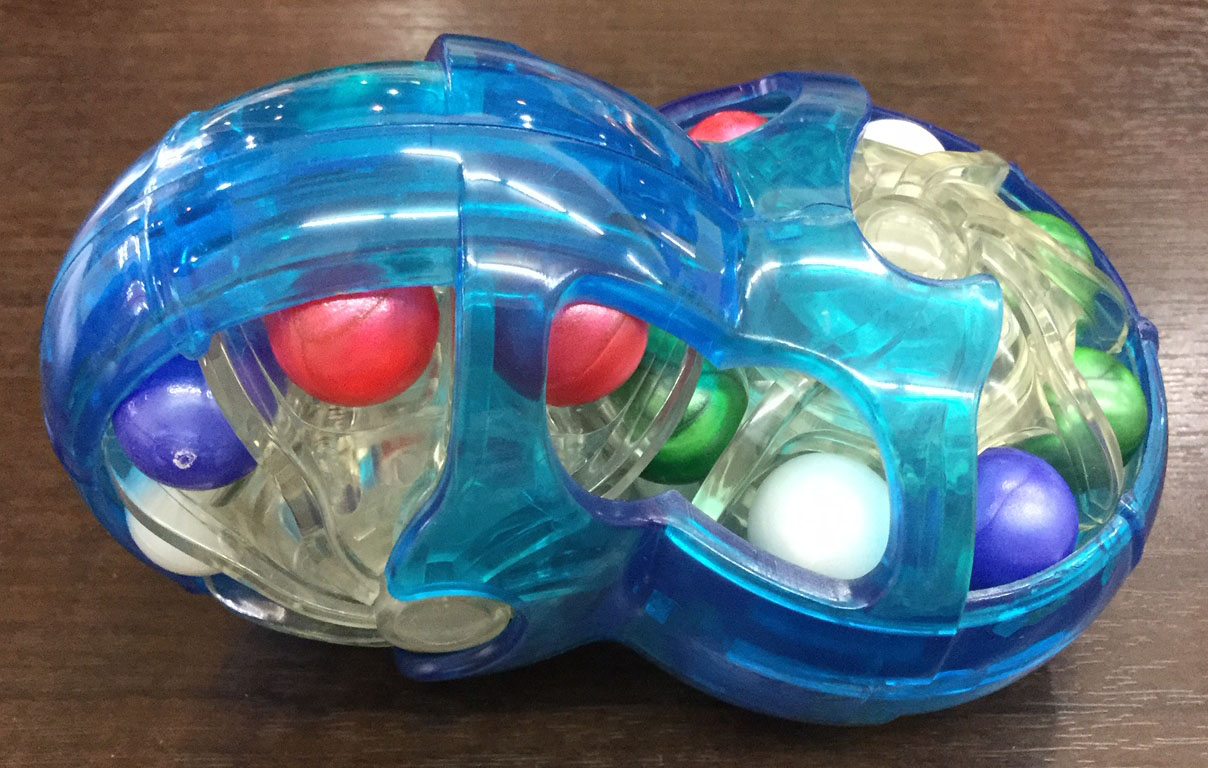}
\caption{Commercial cyclic shift puzzles: Turnstile (left) and Rubik's Shells (right)}
\label{fig:puzzle}
\end{figure}

In terms of reconfiguration problems, this puzzle can be modeled as follows.
The input of the problem is a graph $G=(V,E)$, a set of colors $\Col=\{1,2,\ldots,c\}$, and one colored token on each vertex in $V$. We are also given a set $\calC$ of cycles of $G$. The basic operation on $G$ is called ``shift'' along a cycle $C$ in $\calC$, and it moves each token located on a vertex in $C$ into the next vertex along $C$. This operation generalizes the token swapping problem, which was introduced by Yamanaka et al.~\cite{Yamanakaetal2018}, and has been well investigated recently. Indeed, when we restrict each cycle in $\calC$ to have length two (each cycle would correspond to an edge in $E$), the cyclic shift problem is equivalent to the token swapping problem.

In the mathematical literature, the study of permutation groups and their generators has a long history. An important theorem by Babai~\cite{babai} states that the probability that two random permutations of $n$ objects generate either the symmetric group $S_n$ (i.e., the group of all permutations) or the alternating group $A_n$ (i.e., the group of all even permutations) is $1-1/n+O(n^2)$. However, the theorem says nothing about the special case where the generators are cycles.

In~\cite{heath}, Heath et al.\ give a characterization of the permutations that, together with a cycle of length $n$, generate either $A_n$ or $S_n$, as opposed to a smaller permutation group. On the other hand, in~\cite{jones}, Jones shows that $A_n$ and $S_n$ are the only finite primitive permutation groups containing a cycle of length $n-3$ or less. However, his proof is non-constructive, as it heavily relies on the classification of finite simple groups (and, as the author remarks, a self-contained proof is unlikely to exist). In particular, no non-trivial upper bound is known on the distance of two given permutations in terms of a set of generators.

The computational complexity of related problems has been studied, too. It is well known that, given a set of generators, the size of the permutation group they generate is computable in polynomial time. Also, the inclusion of a given permutation $\pi$ in the group is decidable in polynomial time, and an expression for $\pi$ in terms of the generators is also computable in polynomial time~\cite{furst}.

In contrast, Jerrum showed that computing the distance between two given permutations in terms of two generators is PSPACE-complete~\cite{jerrum}. However, the generators used for the reduction are far from being cycles.

In this paper, after giving some definitions (\cref{sec:2}), we study the configuration space of a large class of cyclic shift problems which generalize the puzzles in \cref{fig:puzzle} (\cref{sec:3}). We show that, except for one special case, the permutation group generated by a given set of cycles is $S_n$ if at least one of the cycles has even length, and it is $A_n$ otherwise. This result is in agreement with Babai's theorem~\cite{babai}, and shows a similarity with the configuration space of the (generalized) 15-puzzle~\cite{wilson}. Moreover, our proofs in \cref{sec:3} are constructive, and yield polynomial upper bounds on the number of shift operations required to reach a given configuration. This is contrasted with \cref{sec:4}, where we show that finding a shortest sequence of shift operations to obtain a desired configuration is NP-hard, even for puzzles with tokens of only two different colors.

\section{Preliminaries}
\label{sec:2}
Let $G=(V,E)$ be a finite, simple, undirected graph, where $V$ is the vertex set, with $n=\msize{V}$, and $E$ is the edge set.
Let $\Col=\{1,2,\ldots,c\}$ be a set of colors, where $c$ is a constant.
A \emph{token placement} for $G$ is a function $f\colon V\to \Col$:
that is, $f(v)$ represents the color of the token placed on the vertex $v$.
Without loss of generality, we assume $f$ to be surjective. 

Let us fix a set $\calC$ of cycles in $G$ (note that $\calC$ does not necessarily contain all cycles of $G$). %and $\calC$ is given as a part of input of our problem.
Two distinct token placements $f$ and $f'$ of $G$ are \emph{adjacent} with respect to $\calC$ if the following
two conditions hold:
(1) there exists a cycle $C=(v_1, v_2, \dots, v_j)$ in $\calC$ such that
       $f'(v_{i+1})=f(v_i)$ and $f'(v_1)=f(v_j)$ or 
       $f'(v_{i})=f(v_{i+1})$ and $f'(v_j)=f(v_1)$ for $1\le i\le j$, and 
(2) $f'(w)=f(w)$ for all vertices $w\in V\setminus \{v_1,\ldots,v_i\}$.
In this case, we say that $f'$ is obtained from $f$ by \emph{shifting} the tokens
along the cycle $C$.
If an edge $e\in E$ is not spanned by any cycle in $\calC$, 
$e$ plays no role in shifting tokens. 
Therefore, without loss of generality, we assume that 
every edge is spanned by at least one cycle in $\calC$.

We say that two token placements $f_1$ and $f_2$ are \emph{compatible} if, for each color $c'\in\Col$, we have $\msize{f_1^{-1}(c')}=\msize{f_2^{-1}(c')}$. Obviously, compatibility is an equivalence relation on token placements, and its equivalence classes are called \emph{compatibility classes} for $G$ and $\Col$. For a compatibility class $P$ and a cycle set $\calC$, we define the \emph{token-shifting graph of $P$ and $\calC$} as the undirected graph with vertex set $P$, where there is an edge between two token placements if and only if they are adjacent with respect to $\calC$. A walk in a token-shifting graph starting from $f$ and ending in $f'$ is called a \emph{shifting sequence between $f$ and $f'$}, and the distance between $f$ and $f'$, i.e., the length of a shortest walk between them, is denoted as $\dist(f,f')$ (if there is no walk between $f$ and $f'$, their distance is defined to be $\infty$). If $\dist(f,f')<\infty$, we write $f\simeq f'$.

For a given number of colors $c$, we define the \emph{$c$-Colored Token Shift} problem as follows. The input is a graph $G=(V,E)$, a cycle set $\calC$ for $G$, two compatible token placements $f_0$ and $f_t$ (with colors drawn from the set $\Col=\{1,2,\ldots,c\}$), 
and a non-negative integer $\ell$. The goal is to determine whether
$\dist(f_0,f_t)\leq\ell$ holds. In the case that $\ell$ is not given, 
we consider the $c$-Colored Token Shift problem as 
an optimization problem that aims at computing $\dist(f_0,f_t)$.

\section{Algebraic Analysis of the Puzzles}
\label{sec:3}
For the purpose of this section, the vertex set of the graph $G=(V,E)$ will be $V=\{1,2,\dots,n\}$, and the number of colors will be $c=n$, so that $\Col=V$, and a token placement on $G$ can be interpreted as a permutation of $V$. To denote a permutation $\pi$ of $V$, we can either use the one-line notation $\pi=[\pi(1)\ \pi(2)\ \dots\ \pi(n)]$, or we can write down its cycle decomposition: for instance, the permutation $[3\ 6\ 4\ 1\ 7\ 2\ 5]$ can be expressed as the product of disjoint cycles $(1\ 3\ 4)(2\ 6)(5\ 7)$.

Note that, given a cycle set $\calC$, shifting tokens along a cycle $(v_1, v_2, \dots, v_j)\in \calC$ corresponds to applying the permutation $(v_1\ v_2\ \dots\ v_j)$ or its inverse $(v_j\ v_{j-1}\ \dots\ v_1)$ to $V$. The set of token placements generated by shifting sequences starting from the ``identity token placement'' $f_0=[1\ 2\ \dots\ n]$ is therefore a permutation group with the composition operator, which we denote by $H_{\calC}$, and we call it \emph{configuration group generated by $\calC$}. Since we visualize permutations as functions mapping vertices of $G$ to colors (and not the other way around), it makes sense to compose chains of permutations from right to left, contrary to the common convention in the permutation group literature. So, for example, if we start from the identity token placement for $n=5$ and we shift tokens along the cycles $(1\ 2\ 3)$ and $(3\ 4\ 5)$ in this order, we obtain the token placement
$$(1\ 2\ 3)(3\ 4\ 5) = [2\ 3\ 1\ 4\ 5]\, [1\ 2\ 4\ 5\ 3] = [2\ 3\ 4\ 5\ 1] = (1\ 2\ 3\ 4\ 5).$$
(Had we composed permutations from left to right, we would have obtained the token placement $[2\ 4\ 1\ 5\ 3] = (1\ 2\ 4\ 5\ 3)$ as a result.)

One of our goals in this section is to determine the configuration groups $H_{\calC}$ generated by some
classes of cycle sets $\calC$. Our choice of $\calC$ will be inspired by the puzzles in \cref{fig:puzzle},
and will consist of arrangements of cycles that share either one or two adjacent vertices. As we will see,
except in one special case, the configuration groups that we obtain are either the symmetric group $S_n$ (i.e., the group of all permutations) or the alternating group $A_n$ (i.e., the group of all even permutations), depending on whether the cycle %generator 
set $\calC$ contains at least one even-length cycle or not: indeed, observe that a cycle of length $j$ corresponds to an even permutation if and only if $j$ is odd.

Note that the set of permutations in the configuration group $H_{\calC}$ coincides with the connected component of the token-shifting graph (as defined in the previous section) that contains $f_0$. The other connected components are simply given by the cosets of $H_{\calC}$ in $S_n$ (thus, they all have the same size), while the number of connected components of the token-shifting graph is equal to the index of $H_{\calC}$ in $S_n$, i.e., $n!/\msize{H_{\calC}}$.

The other goal of this section is to estimate the diameter of the token-shifting graph, i.e., the maximum distance between any two token placements $f_0$ and $f_t$ such that $f_0\simeq f_t$. To this end, we state some basic preliminary facts, which are folklore, and can be proved by mimicking the ``bubble sort'' algorithm.

\begin{proposition}\label{p:basic}
~\begin{enumerate}
\item[{\bf 1.}] The $n$-cycle $(1\ 2\ \dots\ n)$ and the transposition $(1\ 2)$ can generate any permutation of $\{1,2,\dots,n\}$ in $O(n^2)$ shifts.
\item[{\bf 2.}] The $n$-cycle $(1\ 2\ \dots\ n)$ and the 3-cycle $(1\ 2\ 3)$ can generate any even permutation of $\{1,2,\dots,n\}$ in $O(n^2)$ shifts.\footnote{Of course, the two cycles generate strictly more than $A_n$ (hence $S_n$) if and only if $n$ is even; however, we will only apply \cref{p:basic}.2 to generate even permutations.}
\item[{\bf 3.}] The 3-cycles $(1\ 2\ 3)$, $(2\ 3\ 4)$, \dots, $(n-2\ n-1\ n)$ can generate any even permutation of $\{1,2,\dots,n\}$ in $O(n^2)$ shifts.\qed
\end{enumerate}
\end{proposition}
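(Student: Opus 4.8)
The plan is to prove all three parts from a single template: first realize a family of \emph{local} generators (adjacent transpositions for Part~1, adjacent $3$-cycles for Parts~2 and~3), then sort an arbitrary (even) permutation by a selection-sort pass built out of these local generators, and finally control the total number of shifts by a telescoping argument. Throughout, it suffices to write any target permutation as a product of $O(n^2)$ generators, equivalently to sort it to the identity, since inverting a word preserves its length.

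For Part~1, I would start from the conjugation identity $\sigma^{\,j-1}\tau\,\sigma^{-(j-1)}=(j\ \ j+1)$ for $1\le j\le n-1$, where $\sigma=(1\ 2\ \dots\ n)$ and $\tau=(1\ 2)$; this produces every adjacent transposition. As adjacent transpositions generate $S_n$, I would sort the target by a selection sort that, for $m=1,2,\dots,n-1$, drags the value destined for position $m$ leftward one step at a time using $(j\ \ j+1)$ for decreasing $j$, leaving the already-placed prefix untouched. Each placement uses $O(n)$ adjacent transpositions, for $O(n^2)$ in total. The first quantitative subtlety appears here: expanding each transposition as $\sigma^{\,j-1}\tau\,\sigma^{-(j-1)}$ and counting naively gives $O(n^3)$ shifts. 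I would avoid this by noting that the indices $j$ used within one placement are \emph{monotonically decreasing}, so the intervening powers of $\sigma$ telescope (consecutive steps differ by a single $\sigma^{-1}$); the $\sigma$-cost per placement is then $O(n)$, hence $O(n^2)$ overall.

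Part~3 uses the same selection-sort skeleton with the adjacent $3$-cycles $t_k=(k\ k+1\ k+2)$ as local generators (each costing one shift, so no telescoping is needed yet). To place the value destined for position $m$, I would move it leftward \emph{two} positions at a time with a single $t_k$ touching only positions $\ge m$; if the value overshoots to position $m+1$, one application of $t_m^{-1}$ drops it into position $m$ while disturbing only positions $m+1,m+2$. I expect this to be the main obstacle: a $3$-cycle cannot translate a single token by an odd number of positions cleanly, and the off-by-one correction above is exactly what sidesteps the parity problem that would otherwise force an odd generator. After placing the values for positions $1,\dots,n-2$ in $O(n^2)$ shifts, the running permutation is still even (it is the even target composed with even $3$-cycles), and an even permutation fixing $1,\dots,n-2$ must also fix $n-1$ and $n$; so the last two positions are correct for free.

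Finally, Part~2 reduces to Part~3: I would realize each local $3$-cycle as $t_k=\sigma^{\,k-1}\rho\,\sigma^{-(k-1)}$ with $\rho=(1\ 2\ 3)$ and run the Part~3 algorithm verbatim. Exactly as in Part~1, the indices $k$ occurring within each placement are monotone, so the conjugating powers of $\sigma$ telescope and the $O(n^2)$ adjacent $3$-cycles are simulated with only $O(n^2)$ additional $\sigma$-shifts. In summary, the conjugation identities and the selection sort are routine; the two non-routine points I would be careful about are the telescoping bounds that keep Parts~1 and~2 at $O(n^2)$ rather than $O(n^3)$, and the parity bookkeeping in Part~3 that lets us finish an even permutation using only even generators.
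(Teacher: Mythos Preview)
Your proposal is correct and aligns with the paper's own treatment: the paper does not actually prove this proposition, stating only that the facts ``are folklore, and can be proved by mimicking the `bubble sort' algorithm'' before placing a \qed. Your selection-sort realization is exactly that kind of argument, and the two points you flag as non-routine---the telescoping of the conjugating powers of $\sigma$ to keep Parts~1 and~2 at $O(n^2)$, and the parity bookkeeping that finishes Part~3---are precisely the details the paper leaves implicit.
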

All upper bounds given in \cref{p:basic} are worst-case asymptotically optimal (refer to~\cite{jerrum} for some proofs).

\subsection{Puzzles with two cycles}\label{s:3.1}

We first investigate the case where the cycle %generator 
set $\calC$ contains exactly two cycles $\mA$ and $\mB$, either of the form $\mA=(1\ 2\ \dots\ a)$ and $\mB=(a\ a+1\ \dots\ n)$ with $1< a< n$, or of the form $\mA=(1\ 2\ \dots\ a)$ and $\mB=(a-1\ a\ a+1\ \dots\ n)$, with $1< a\leq n$. The first puzzle is called \emph{1-connected $(a,b)$-puzzle}, where $n=a+b-1$, and the second one is called \emph{2-connected $(a,b)$-puzzle}, where $n=a+b-2$ (so, in both cases $a>1$ and $b>1$ are the lengths of the two cycles $\mA$ and $\mB$, respectively). See \cref{fig:2puzzle} for some examples. Note that the Turnstile puzzle in \cref{fig:puzzle} (left) can be regarded as a 2-connected $(6,6)$-puzzle.

\begin{figure}
\centering
\includegraphics[width=\textwidth]{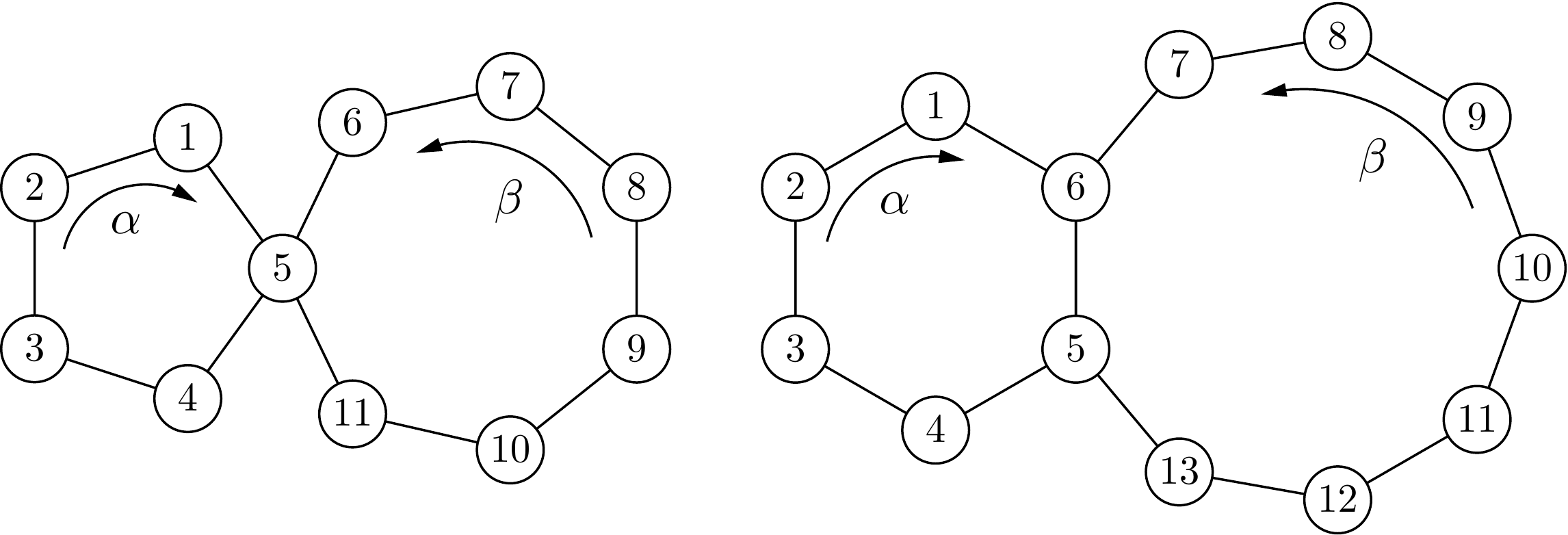}
\caption{A 1-connected $(5,7)$-puzzle (left) and a 2-connected $(6,9)$-puzzle (right)}
\label{fig:2puzzle}
\end{figure}

\begin{theorem}\label{1-ab-puz}
The configuration group of a 1-connected $(a,b)$-puzzle is $A_n$ if both $a$ and $b$ are odd, and it is $S_n$ otherwise. Any permutation in the configuration group can be generated in $O(n^2)$ shifts.
\end{theorem}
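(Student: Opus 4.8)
The plan is to pin down the group $H_{\calC}=\langle\alpha,\beta\rangle$, where $\alpha=(1\ 2\ \dots\ a)$ and $\beta=(a\ a+1\ \dots\ n)$ share only the vertex $a$, by first settling the parity question and then proving the single hard inclusion $A_n\subseteq H_{\calC}$. Recall that an $m$-cycle is an even permutation exactly when $m$ is odd; hence $\alpha$ and $\beta$ are both even if and only if $a$ and $b$ are both odd, in which case $H_{\calC}\subseteq A_n$, whereas if $a$ or $b$ is even then $H_{\calC}$ contains an odd permutation and so cannot lie in $A_n$. Thus the theorem follows once I show $A_n\subseteq H_{\calC}$: in the both-odd case this forces $H_{\calC}=A_n$, and otherwise $H_{\calC}$ contains $A_n$ together with an odd permutation, so $H_{\calC}=S_n$ (as $A_n$ has index $2$). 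This remark also handles the running-time claim in the $S_n$ case: an odd target $g$ can be written as $g=h\alpha$ with $h=g\alpha^{-1}$ even, so after generating $h$ inside $A_n$ one extra shift produces $g$.

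The first concrete step is to exhibit two convenient elements of $H_{\calC}$. Composing the two cycles (recall that we compose right to left) gives the full $n$-cycle $\sigma:=\alpha\beta=(1\ 2\ \dots\ n)$ in just two shifts; this is the element I will use to translate small-support permutations across the whole vertex set. Next, because $\alpha$ and $\beta$ overlap in the single vertex $a$, their commutator has small support: a direct computation shows that $[\alpha,\beta]=\alpha\beta\alpha^{-1}\beta^{-1}$ is a $3$-cycle on $\{1,a,a+1\}$, obtained in four shifts. The essential feature of this $3$-cycle $t$ is that its support contains the $\sigma$-consecutive pair $\{a,a+1\}$.

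It remains to generate all of $A_n$ from $\sigma$ and $t$ within $O(n^2)$ shifts, which is the crux. My plan is to reduce to \cref{p:basic}: I will manufacture the adjacent $3$-cycles $(i\ i+1\ i+2)$ and invoke \cref{p:basic}.3 (equivalently, produce $(1\ 2\ 3)$ and invoke \cref{p:basic}.2). Starting from $t$, I conjugate by $\sigma$ and by $\alpha$ and take suitable products and commutators to \emph{shrink} the support to three consecutive integers; for instance, combining $t$ with $\alpha t\alpha^{-1}$, which shares two points with $t$, yields a new $3$-cycle whose support is strictly more concentrated, and a bounded number of such steps produces one adjacent $3$-cycle. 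Conjugating that $3$-cycle by the powers $\sigma^{k}$ then sweeps it across all of $V$ to give every $(i\ i+1\ i+2)$ at a cost of $O(n)$ shifts each, i.e.\ $O(n^2)$ in total, after which \cref{p:basic}.3 assembles an arbitrary even permutation in a further $O(n^2)$ shifts.

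The step I expect to be most delicate is exactly this support-shrinking: two $3$-cycles sharing only one point multiply to a $5$-cycle, so I must arrange a two-point overlap before multiplying, and I must verify that the construction never stalls in a proper (imprimitive) subgroup. The cautionary example $\langle(1\ 2\ \dots\ n),(1\ 3\ 5)\rangle$ for $n=6$, which preserves the even/odd block system and is therefore neither $A_n$ nor $S_n$, shows that an arbitrary $3$-cycle will \emph{not} suffice and that it is the $\sigma$-consecutive pair in the support of $t$ that makes the reduction go through. Checking the precise cycle identities and the bookkeeping that keeps each stage within $O(n)$ shifts is routine, but this is where the real work lies.
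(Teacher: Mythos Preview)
Your overall architecture---parity argument, reduce to $A_n\subseteq H_{\calC}$, build a full $n$-cycle and a $3$-cycle, then invoke \cref{p:basic}---is exactly the paper's. The difference is that you take a detour that the paper avoids with one line.

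You compute the commutator $\alpha\beta\alpha^{-1}\beta^{-1}$ and obtain a $3$-cycle on $\{1,a,a+1\}$, which is not $\sigma$-consecutive, and you then propose a ``shrinking'' phase (left somewhat vague) to reach an adjacent $3$-cycle. The paper instead takes the \emph{other} commutator $\alpha^{-1}\beta\alpha\beta^{-1}$, which is already the consecutive $3$-cycle $(a-1\ a\ a+1)$; \cref{p:basic}.2 then applies immediately to $\sigma=\alpha\beta$ and this $3$-cycle, with no shrinking whatsoever. Your shrinking can indeed be made to work (e.g.\ $t\cdot\alpha t\alpha^{-1}=(1\ 2\ a)$, and one more $\alpha$-conjugation gives $(1\ 2\ 3)$), but you should either write it out or, better, switch commutators and delete that paragraph.

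There is also a bookkeeping slip in your route through \cref{p:basic}.3. You produce each $(i\ i{+}1\ i{+}2)$ as $\sigma^{k}(1\ 2\ 3)\sigma^{-k}$ at cost $O(n)$ shifts; \cref{p:basic}.3 then uses $O(n^2)$ of these $3$-cycles, so the total is $O(n^3)$, not $O(n^2)$. The clean fix---which you already mention parenthetically---is to use \cref{p:basic}.2 directly with the $n$-cycle $\sigma$ (cost $2$ per shift) and a single consecutive $3$-cycle (cost $O(1)$ per shift), giving $O(n^2)$ honestly. This is what the paper does.
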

\begin{proof}
Observe that the commutator of $\mA$ and $\mBi$ is the 3-cycle $\mAi\mB\mA\mBi=(a-1\ a\ a+1)$. So, we can apply \cref{p:basic}.2 to the $n$-cycle $\mA\mB=(1\ 2\ \dots\ n)$ and the 3-cycle $(a-1\ a\ a+1)$ to generate any even permutation in $O(n^2)$ shifts. If $a$ and $b$ are odd, then $\mA$ and $\mB$ are even permutations, and therefore cannot generate any odd permutation.

On the other hand, if $a$ is even (the case where $b$ is even is symmetric), then the $a$-cycle $\mA$ is an odd permutation. So, to generate any odd permutation $\pi\in S_n$, we first generate the even permutation $\pi\mA$ in $O(n^2)$ shifts, and then we do one extra shift along the cycle $\mAi$.\qed
\end{proof}

Our first observation about 2-connected $(a,b)$-puzzles is that the composition of $\mAi$ and $\mB$ is the $(n-1)$-cycle $\mAi\mB=(a-2\ a-3\ \dots\ 1\ a\ a+1\ \dots\ n)$, which excludes only the element $a-1$. Similarly, $\mA\mBi=(1\ 2\ \dots\ a-1\ n\ n-1\ \dots\ a+1)$, which excludes only the element $a$. We will write $\mCo$ and $\mCt$ as shorthand for $\mAi\mB$ and $\mA\mBi$ respectively, and we will use the permutations $\mCo$ and $\mCt$ to conjugate $\mA$ and $\mB$, thus obtaining different $a$-cycles and $b$-cycles.\footnote{If $g$ and $h$ are two elements of a group, the \emph{conjugate} of $g$ by $h$ is defined as $h^{-1}gh$. In the context of permutation groups, conjugation by any $h$ is an automorphism that preserves the cycle structure of permutations~\cite[Theorem~3.5]{rotman}.}

\begin{lemma}\label{2-3b-puz}
In a 2-connected $(3,b)$-puzzle, any even permutation can be generated in $O(n^2)$ shifts.
\end{lemma}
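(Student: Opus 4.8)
The plan is to reduce the problem to \cref{p:basic}.2 on the $(n-1)$-element set $S=\{1,3,4,\dots,n\}=V\setminus\{2\}$, and then to route the single leftover token on vertex~$2$ using only a constant number of additional ``expensive'' operations. The key object is the permutation $\mCo=\mAi\mB=(1\ 3\ 4\ \dots\ n)$, which is an $(n-1)$-cycle fixing the vertex $2$; note that in its cyclic order the first three symbols are $1,3,4$. First I would observe that the $3$-cycle $(1\ 3\ 4)$ equals the conjugate $\mB\mA\mBi$, so it lies in the configuration group and is realized in $O(1)$ shifts. Since $\mCo$ acts as a full cycle on $S$ and $(1\ 3\ 4)$ consists of three \emph{consecutive} symbols in the cyclic order of $\mCo$, \cref{p:basic}.2 --- applied to $S$, which has $n-1$ elements --- shows that $\mCo$ and $(1\ 3\ 4)$ generate every even permutation of $S$ (equivalently, every even permutation of $V$ that fixes $2$), and that each such permutation is produced in $O((n-1)^2)=O(n^2)$ applications of $\mCo^{\pm1}$ and $(1\ 3\ 4)^{\pm1}$. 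As each of these costs $O(1)$ shifts of $\mA,\mB$, every even permutation fixing~$2$ is generated in $O(n^2)$ shifts. Write $K$ for this group of even permutations fixing $2$.

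Next I would handle the vertex $2$. Given an arbitrary even permutation $\pi$, if $\pi(2)=2$ then $\pi$ is even and fixes $2$, so $\pi\in K$ and we are done by the previous paragraph. Otherwise $y:=\pi(2)\in S$; since $K$ acts transitively on $S$, I would pick $g\in K$ with $g(3)=y$ and set $c:=g\mA g^{-1}$, which is a $3$-cycle of the form $(g(1)\ 2\ y)$ satisfying $c(2)=y$. Then $\sigma:=c^{-1}\pi$ fixes $2$ and is even, hence $\sigma\in K$, and $\pi=c\sigma=g\mA g^{-1}\sigma$. Both $g$ and $\sigma$ lie in $K$ and cost $O(n^2)$ shifts by the first paragraph, while forming $c=g\mA g^{-1}$ adds only $O(1)$ shifts on top of $g$; realizing $\pi$ therefore takes $O(n^2)$ shifts in total.

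The step I expect to be the main obstacle is keeping the bound at $O(n^2)$ rather than $O(n^3)$. The naive route --- conjugating $\mA$ by powers of $\mCo$ (or of $\mB$) to obtain all consecutive $3$-cycles and then invoking \cref{p:basic}.3 --- fails for efficiency reasons: each individual consecutive $3$-cycle costs $\Theta(n)$ shifts to assemble, and \cref{p:basic}.3 consumes $\Theta(n^2)$ of them, for a total of $\Theta(n^3)$. The construction above circumvents this by paying the $\Theta(n)$-type cost only \emph{inside} a single invocation of \cref{p:basic}.2 on $S$, where the two generators $\mCo$ and $(1\ 3\ 4)$ are each realized in $O(1)$ shifts, and by correcting the one excluded vertex~$2$ with only $O(1)$ further expensive operations ($g$, $\mA$, $g^{-1}$).

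Finally I would dispose of the small cases not covered by the transitivity of $K$ on $S$. For $n\ge 4$ the argument applies verbatim (here $K\cong A_{n-1}$ is transitive on the $n-1\ge 3$ symbols of $S$), and the only genuinely special case is $n=3$, where the generators are $\mA=(1\ 2\ 3)$ and $\mB=(2\ 3)$ and the claim is immediate since $\mA$ alone already generates $A_3$ in $O(1)$ shifts.
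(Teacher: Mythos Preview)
Your argument is correct and follows essentially the same strategy as the paper: produce an $(n-1)$-cycle and a consecutive $3$-cycle inside the configuration group (each realized in $O(1)$ shifts), invoke \cref{p:basic}.2 to obtain all even permutations fixing one vertex in $O(n^2)$ shifts, and then repair that single excluded vertex. The only differences are cosmetic. The paper excludes vertex~$1$, using $\mB=(2\ 3\ \dots\ n)$ directly as the $(n-1)$-cycle together with the conjugate $\mCo\mAi\mCoi=(2\ 3\ 4)$, whereas you exclude vertex~$2$, using $\mCo=(1\ 3\ 4\ \dots\ n)$ and the conjugate $\mB\mA\mBi=(1\ 3\ 4)$. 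For the repair step, the paper simply moves the correct token to vertex~$1$ in $O(n)$ shifts and, if this makes the residual permutation odd, absorbs one extra $\mB$-shift; you instead conjugate $\mA$ by an element of $K$ to hit the target directly, which avoids the parity case but costs an extra $O(n^2)$ block (still within the stated bound). Both routes are valid; the paper's is marginally leaner because $\mB$ is a generator rather than a product, and the repair step is $O(n)$ rather than $O(n^2)$, but asymptotically there is no difference.
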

\begin{proof}
If we conjugate the 3-cycle $\mAi$ by the inverse of $\mCo$, we obtain the 3-cycle $\mCo\mAi\mCoi=(2\ 3\ 4)$. By applying \cref{p:basic}.2 to the $(n-1)$-cycle $\mB$ and the 3-cycle $(2\ 3\ 4)$, we can generate any even permutation of $V\setminus\{1\}$ in $O(n^2)$ shifts.

Let $\pi\in A_n$ be an even permutation of $V$. In order to generate $\pi$, we first move the correct token $\pi(1)$ to position~1 in $O(n)$ shifts, possibly scrambling the rest of the tokens: let $\sigma$ be the resulting permutation. If $\sigma$ is even, then $\sigma^{-1}\pi$ is an even permutation of $V\setminus\{1\}$, and we can generate it in $O(n^2)$ shifts as shown before, obtaining $\pi$.

On the other hand, if $\sigma$ is odd, then one of the generators $\mA$ and $\mB$ must be odd, too. Since $\mA$ is a 3-cycle, it follows that $\mB$ is odd. In this case, after placing the correct token in position~1 via $\sigma$, we shift the rest of the tokens along $\mB$, and then we follow up with $\mBi\sigma^{-1}\pi$, which is an even permutation of $V\setminus\{1\}$, and can be generated it in $O(n^2)$ shifts. Again, the result is $\sigma\mB\mBi\sigma^{-1}\pi=\pi$.\qed
\end{proof}

\begin{lemma}\label{2-45-puz}
In a 2-connected $(a,b)$-puzzle with $a\geq 4$ and $b\geq 5$, any even permutation can be generated in $O(n^2)$ shifts.
\end{lemma}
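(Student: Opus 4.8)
The plan is to mirror the strategy of \cref{2-3b-puz}: first generate the full alternating group on all but one vertex by combining a long cycle with a single $3$-cycle via \cref{p:basic}.2, and then bootstrap to all even permutations of $V$. The two natural long cycles are the $(n-1)$-cycles $\mCo=\mAi\mB$ (missing $a-1$) and $\mCt=\mA\mBi$ (missing $a$) introduced above. As the working long cycle I would take $\mCo$, whose cyclic order is $a-2, a-3, \dots, 1, a, a+1, \dots, n$; note that $\{a, a+1, a+2\}$ is a block of three consecutive vertices of $\mCo$. The one subtlety, compared with the $a=3$ case, is that neither $\mA$ nor $\mB$ is itself a $3$-cycle, and the commutator $\mAi\mB\mA\mBi$ equals the \emph{double transposition} $(a-2\ a-1)(a\ a+1)$ rather than a single $3$-cycle, precisely because $\mA$ and $\mB$ overlap in two vertices.

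First I would manufacture a single $3$-cycle on three consecutive vertices of $\mCo$. Writing $D=(a-2\ a-1)(a\ a+1)$ for the commutator above, one checks that $D\cdot\mCo D\mCoi$ simplifies to the product of two disjoint $3$-cycles $(a-3\ a-2\ a-1)(a\ a+1\ a+2)$, each supported on a block of consecutive vertices; more generally, conjugating $D$ by $\mA$, $\mB$, and powers of $\mCo$ yields a rich supply of double transpositions and such paired $3$-cycles. From these I would distil a single $3$-cycle $\mT$ and steer it, by conjugating with powers of $\mCo$, onto a consecutive block of $\mCo$ (for instance $\{a, a+1, a+2\}$, which already appears above). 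Here the hypotheses $a\ge 4$ and $b\ge 5$ are used to ensure that all the vertices involved are distinct and that there is enough room to separate the two $3$-cycles. After relabelling so that $\mCo$ becomes $(1\ 2\ \dots\ n-1)$ and $\mT$ becomes $(1\ 2\ 3)$, \cref{p:basic}.2 generates every even permutation of $V\setminus\{a-1\}$ in $O(n^2)$ shifts. I expect isolating an honest single $3$-cycle — rather than a product of two — and landing it on a consecutive block to be the main obstacle, since the size-$2$ overlap of $\mA$ and $\mB$ prevents any single commutator from producing a $3$-cycle.

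Finally I would pass from $A_{n-1}$ on $V\setminus\{a-1\}$ to all of $A_n$. The same distillation applied to a conjugate of $D$ that moves $a-1$ yields a $3$-cycle $\mT_0$ through the missing vertex; conjugating $\mT_0$ by elements of the already-available $A_{n-1}$ produces every $3$-cycle of the form $(a-1\ x\ y)$, so the generated group is indeed all of $A_n$. Constructively, to realise a target even permutation $\pi$ I would use one such $3$-cycle (obtained from $\mT_0$ by an $A_{n-1}$-conjugation, in $O(n^2)$ shifts) to place the correct token on vertex $a-1$; since a $3$-cycle is even, the residual permutation fixes $a-1$ and is again even, hence lies in $A_{n-1}$ and is realised in a further $O(n^2)$ shifts. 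Tracking the counts gives the claimed $O(n^2)$ bound overall, with the parity bookkeeping automatic because every step used in the extension is even.
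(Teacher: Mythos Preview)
Your outline has the right shape, and your computations of $D=\mAi\mB\mA\mBi=(a-2\ a-1)(a\ a+1)$ and $D\cdot\mCo D\mCoi=(a-3\ a-2\ a-1)(a\ a+1\ a+2)$ are correct. But the proof has a real gap exactly where you flag it: you never actually ``distil'' a single $3$-cycle from the paired ones. Products of two disjoint $3$-cycles, together with their $\mCo$-conjugates, do not obviously yield a lone $3$-cycle; all such conjugates still move $a-1$ (since $\mCo$ fixes it), and ad-hoc combinations with $\mA,\mB$ tend to produce more complicated cycle types rather than simpler ones. Without this step, neither the $A_{n-1}$ phase nor the extension phase gets off the ground, so as written the argument is incomplete rather than merely sketchy.

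The paper's proof closes this gap with a specific construction that bypasses the commutator route entirely. Instead of working with $D$, it conjugates $\mB$ and $\mBi$ by the two $(n-1)$-cycles to obtain $\mDo=\mCoi\mB\mCo$ and $\mDt=\mCti\mBi\mCt$; a direct computation gives
\[
\mDo\mDt=(1\ a\ a-2)(a-1\ n)(a+1\ a+2),
\]
a $3$-cycle times two disjoint transpositions, so that $(\mDo\mDt)^2=(1\ a-2\ a)$ is an honest $3$-cycle obtained in $O(1)$ shifts. One further conjugation by a power of $\mA$ places this $3$-cycle so that, together with $\mCt$, it forms a $2$-connected $(3,n-1)$-puzzle on $V$; invoking \cref{2-3b-puz} then gives all of $A_n$ in one stroke, with no separate ``extend from $A_{n-1}$ to $A_n$'' phase needed. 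The moral: rather than trying to cancel half of a symmetric product of $3$-cycles, look for a word whose cycle structure is a single $3$-cycle times involutions, and square.
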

\begin{proof}
As shown in \cref{fig:2-45-puz}, the conjugate of $\mB$ by $\mCo$ is the $b$-cycle $$\mDo=\mCoi\mB\mCo=(a\ a+1\ \dots\ n-1\ a-1\ 1),$$ and the conjugate of $\mBi$ by $\mCt$ is the $b$-cycle $$\mDt=\mCti\mBi\mCt=(n\ n-1\ \dots\ a+2\ a\ a-2\ a-1).$$ Their composition is $\mDo\mDt=(1\ a\ a-2)(a-1\ n)(a+1\ a+2)$, and therefore $(\mDo\mDt)^2$ is the 3-cycle $(1\ a-2\ a)$. Conjugating this 3-cycle by $\mAi$, we finally obtain the 3-cycle $\mT=\mA(\mDo\mDt)^2\mAi=(1\ 2\ a-1)$; note that $\mT$ has been generated in a number of shifts independent of $n$. Now, since the 3-cycle $\mT$ and the $(n-1)$-cycle $\mCt$ induce a 2-connected $(3,n-1)$-puzzle on $V$, we can apply \cref{2-3b-puz} to generate any even permutation of $V$ in $O(n^2)$ shifts.\qed
\end{proof}

\begin{figure}
\centering
\includegraphics[width=\textwidth]{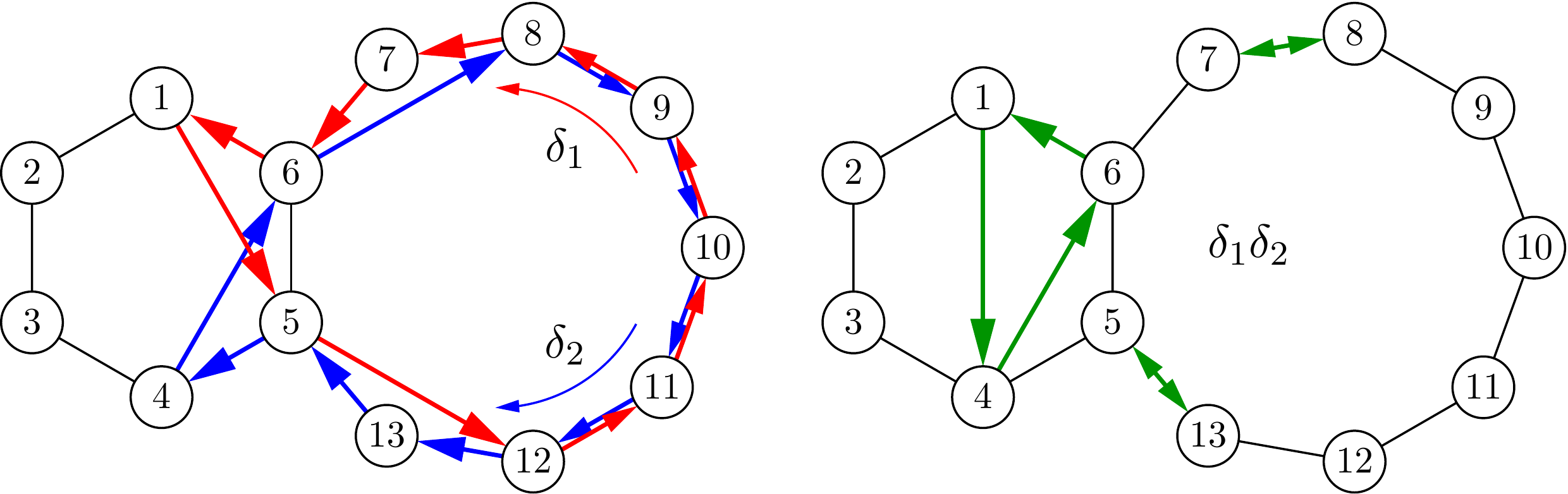}
\caption{Some permutations constructed in the proof of \cref{2-45-puz}}
\label{fig:2-45-puz}
\end{figure}

\begin{theorem}\label{2-ab-puz}
The configuration group of a 2-connected $(a,b)$-puzzle is:
\begin{enumerate}
\item[{\bf 1.}] Isomorphic to $S_{n-1}=S_5$ if $a=b=4$.
\item[{\bf 2.}] $A_n$ if both $a$ and $b$ are odd.
\item[{\bf 3.}] $S_n$ otherwise.
\end{enumerate}
Any permutation in the configuration group can be generated in $O(n^2)$ shifts.
\end{theorem}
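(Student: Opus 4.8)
The plan is to handle parts 2 and 3 uniformly, reusing \cref{2-3b-puz,2-45-puz} together with a parity argument in the style of \cref{1-ab-puz}, and to treat the exceptional puzzle $a=b=4$ (part 1) separately. Since relabeling the vertices identifies a 2-connected $(a,b)$-puzzle with a 2-connected $(b,a)$-puzzle, I may assume $a\le b$ and split on the value of $a$. If $a=2$, then $\mA=(1\ 2)$ and $\mB=(1\ 2\ \dots\ n)$, so \cref{p:basic}.1 already yields $H_{\calC}=S_n$ in $O(n^2)$ shifts, matching part 3 (as $a=2$ is even). If $a=3$, then \cref{2-3b-puz} generates every even permutation in $O(n^2)$ shifts; and if $a\ge 4$ with $(a,b)\ne(4,4)$, then $b\ge 5$ and \cref{2-45-puz} does the same. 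Thus, apart from $a=b=4$, the group $H_{\calC}$ contains $A_n$, and every element of $A_n$ is reachable in $O(n^2)$ shifts.

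Given $A_n\subseteq H_{\calC}$, the conclusion follows from parity. A $j$-cycle is even iff $j$ is odd, so $\mA$ and $\mB$ are both even precisely when $a$ and $b$ are both odd; then $H_{\calC}\subseteq A_n$, hence $H_{\calC}=A_n$, which is part 2. Otherwise some generator, say $\mA$, is an odd permutation, so $H_{\calC}$ strictly contains the maximal subgroup $A_n$ and therefore equals $S_n$, giving part 3. The $O(n^2)$ bound survives: an odd permutation $\pi$ is obtained by generating the even permutation $\pi\mAi$ in $O(n^2)$ shifts and then performing a single shift along $\mA$.

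The real work is part 1, with $n=6$, $\mA=(1\ 2\ 3\ 4)$, and $\mB=(3\ 4\ 5\ 6)$, where neither lemma applies. My plan is to show that $H_{\calC}$ is the exotic transitive copy of $S_5$ inside $S_6$ by realizing it as $\mathrm{PGL}_2(5)$. Concretely, I would identify $V=\{1,\dots,6\}$ with the projective line $\mathbb{P}^1(\mathbb{F}_5)=\{0,1,2,3,4,\infty\}$ via the bijection sending $1,2,3,4,5,6$ to $1,2,4,3,\infty,0$; a direct check shows that under this identification $\mA$ becomes the M\"obius map $z\mapsto 2z$ and $\mB$ becomes $z\mapsto 4/(3z+1)$, both of which lie in $\mathrm{PGL}_2(5)$. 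Hence $H_{\calC}$ embeds in $\mathrm{PGL}_2(5)\cong S_5$, so $\msize{H_{\calC}}\le 120$.

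To pin the order down to exactly $120$, I would read off elements of several orders from the generators: $\mA$ has order $4$, the product $\mA\mB=(1\ 2\ 3)(4\ 5\ 6)$ has order $3$, and $\mAi\mB=(1\ 4\ 5\ 6\ 2)$ has order $5$. Thus $\mathrm{lcm}(3,4,5)=60$ divides $\msize{H_{\calC}}$, leaving only $\msize{H_{\calC}}\in\{60,120\}$. But $\mA$ is an odd permutation, so $H_{\calC}\not\subseteq A_6$, whereas the unique index-$2$ subgroup of $\mathrm{PGL}_2(5)$ is $\mathrm{PSL}_2(5)\cong A_5$, which consists of even permutations; this rules out $60$. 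Therefore $\msize{H_{\calC}}=120$ and $H_{\calC}\cong S_5$, with a constant number of shifts since $n=6$. I expect this exceptional case to be the only genuine obstacle: parts 2 and 3 reduce mechanically to the earlier lemmas and a parity count, while the $(4,4)$-puzzle forces one to recognize the sporadic transitive $S_5\subset S_6$ and to exclude $S_6$ itself.
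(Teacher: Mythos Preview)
Your argument for parts 2 and 3 is essentially identical to the paper's: reduce to \cref{p:basic}.1 when $a=2$, invoke \cref{2-3b-puz} when $a=3$, invoke \cref{2-45-puz} when $a\ge 4$ and $(a,b)\neq(4,4)$, and then run the parity argument exactly as in \cref{1-ab-puz}. No issues there.

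For the exceptional $(4,4)$ case your route is different from the paper's, but correct. The paper uses the explicit outer automorphism $\psi$ of $S_6$ (given on the transpositions $(1\ i)$), computes $\psi(\mA)=(2\ 5\ 3\ 6)$ and $\psi(\mB)=(1\ 3\ 5\ 2)$, observes that both fix the point $4$ (upper bound $S_5$), and then notes that $\psi(\mA)$ together with $\psi(\mA)\psi(\mB)=(1\ 6\ 2)$ forms a 2-connected $(3,4)$-puzzle on $\{1,2,3,5,6\}$, so the already-proved cases give the lower bound. Your approach instead realises $H_{\calC}$ inside $\mathrm{PGL}_2(5)$ acting on $\mathbb{P}^1(\mathbb{F}_5)$ via an explicit labelling, and then pins down the order with Lagrange (elements of orders $3,4,5$) plus a sign argument ruling out the index-$2$ subgroup $\mathrm{PSL}_2(5)$. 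Both arguments are recognising the same exotic transitive copy of $S_5$ in $S_6$; the paper's version has the advantage of being entirely self-contained (it recycles the $(3,4)$ case just established) and of yielding an explicit membership test, namely $\pi\in H_{\calC}$ iff $\psi(\pi)$ fixes $4$. Your version is more conceptual and avoids writing down the outer automorphism, at the cost of importing the facts $\mathrm{PGL}_2(5)\cong S_5$ and the uniqueness of its index-$2$ subgroup. Either is fine here.
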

\begin{proof}
By the symmetry of the puzzle, we may assume $a\leq b$. The case with $a=2$ is equivalent to \cref{p:basic}.1, so let $a\geq 3$. If $a\neq 4$ or $b\neq 4$, then \cref{2-3b-puz,2-45-puz} apply, hence we can generate any even permutation in $O(n^2)$ shifts: the configuration group is therefore at least $A_n$. Now we reason as in \cref{1-ab-puz}: if $a$ and $b$ are odd, then $\mA$ and $\mB$ are even permutations, and cannot generate any odd one. If $a$ is even (the case where $b$ is even is symmetric), then $\mA$ is an odd permutation. In this case, to generate any odd permutation $\pi\in S_n$, we first generate the even permutation $\pi\mA$ in $O(n^2)$ shifts, and then we do one more shift along the cycle $\mAi$ to obtain $\pi$.

The only case left is $a=b=4$. To analyze the 2-connected $(4,4)$-puzzle, consider the outer automorphism $\psi\colon S_6\to S_6$ defined on a generating set of $S_6$ as follows (cf.~\cite[Corollary~7.13]{rotman}):
\begin{align*}
\psi((1\ 2))&=(1\ 5)(2\ 3)(4\ 6),\qquad \psi((1\ 3))=(1\ 4)(2\ 6)(3\ 5),\\
\psi((1\ 4))&=(1\ 3)(2\ 4)(5\ 6),\qquad \psi((1\ 5))=(1\ 2)(3\ 6)(4\ 5),\\
\psi((1\ 6))&=(1\ 6)(2\ 5)(3\ 4).
\end{align*}
Because $\psi$ is an automorphism, the subgroup of $S_6$ generated by $\mA$ and $\mB$ is isomorphic to the subgroup generated by the permutations $\psi(\mA)$ and $\psi(\mB)$. Since $\mA=(1\ 2\ 3\ 4)=(1\ 2)(1\ 3)(1\ 4)$ and $\mB=(3\ 4\ 5\ 6)=(1\ 3)(1\ 4)(1\ 5)(1\ 6)(1\ 3)$, and recalling that $\psi(\pi_1\pi_2)=\psi(\pi_1)\psi(\pi_2)$ for all $\pi_1,\pi_2\in S_6$, we have:
\begin{align*}
\psi(\mA)&=\psi((1\ 2))\psi((1\ 3))\psi((1\ 4))=[1\ 5\ 6\ 4\ 3\ 2]=(2\ 5\ 3\ 6) \mbox{ and}\\
\psi(\mB)&=\psi((1\ 3))\psi((1\ 4))\psi((1\ 5))\psi((1\ 6))\psi((1\ 3))=[3\ 1\ 5\ 4\ 2\ 6]=(1\ 3\ 5\ 2).
\end{align*}
Note that the new generators $\psi(\mA)$ and $\psi(\mB)$ both leave the token $4$ in place, and so they cannot generate a subgroup larger than $S_5$ (up to isomorphism). On the other hand, we have $\psi(\mA)\psi(\mB)=(1\ 6\ 2)$. This 3-cycle, together with the 4-cycle $\psi(\mA)$, induces a 2-connected $(3,4)$-puzzle on $\{1,2,3,5,6\}$: as shown before, the configuration group of this puzzle is (isomorphic to) $S_5$. We conclude that the configuration group of the 2-connected $(4,4)$-puzzle is isomorphic to $S_5$, as well. A given permutation $\pi\in S_6$ is in the configuration group if and only if $\psi(\pi)$ leaves the token $4$ in place.\qed
\end{proof}

\subsection{Puzzles with any number of cycles}
Let us generalize the $(a,b)$-puzzle to larger numbers of cycles.
(As far as the authors know, there are commercial products that have 2, 3, 4, and 6 cycles.)
We say that two cycles are \emph{properly interconnected} if they share exactly one vertex, of if they share exactly two vertices which are consecutive in both cycles. Note that all 1-connected and 2-connected $(a,b)$-puzzles consist of two properly interconnected cycles. Given a set of cycles $\calC$ in a graph $G=(V,E)$, let us define the \emph{interconnection graph} $\hat G=(\calC, \hat E)$, where there is an (undirected) edge between two cycles of $\calC$ if and only if they are properly interconnected.

Let us assume $\msize{V}>6$ (to avoid special configurations of small size, which can be analyzed by hand), and let $\calC$ consist of $k$ cycles of lengths $n_1$, $n_2$, \dots, $n_k$, respectively. We say that $\calC$ induces a \emph{generalized $(n_1, n_2, \dots, n_k)$-puzzle} on $V$ if there is a subset $\calC'\subseteq \calC$ such that:
\begin{enumerate}
\item[(1)] $\calC'$ contains at least two cycles;
\item[(2)] the induced subgraph $\hat G[\calC']$ is connected;
\item[(3)] each vertex of $G$ is contained in at least one cycle in $\calC'$.
\end{enumerate}
When we fix such a subset $\calC'$, the cycles in $\calC'$ are called \emph{relevant cycles}, and the vertices of $G$ that are shared by two properly interconnected relevant cycles are called \emph{relevant vertices} for those cycles. See \cref{fig:more-cycles} for an example of a generalized puzzle.

\begin{figure}
\centering
\includegraphics[width=\textwidth]{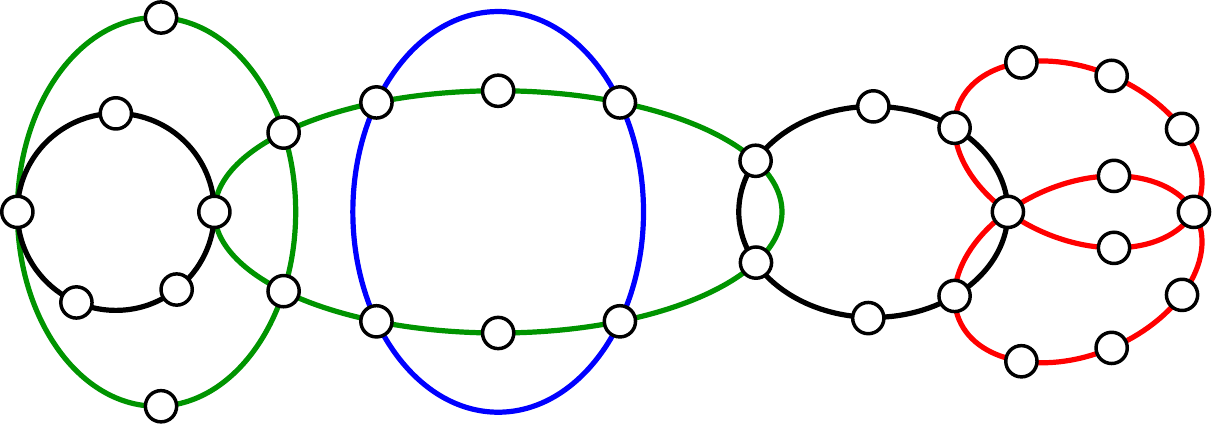}
\caption{A generalized puzzle where any permutation can be generated in $O(n^5)$ shifts, due to \cref{gen-puz}. Note that the blue cycle is the only cycle of even length, and is not properly interconnected with any other cycle. Also, the two red cycles and the two green cycles intersect each other but are not properly interconnected.}
\label{fig:more-cycles}
\end{figure}

The next two lemmas are technical; their proof is found in the Appendix.

\begin{lemma}\label{abc-puz}
In a generalized puzzle with three relevant cycles, $\calC'=\{C_1,C_2,C_3\}$, such that $C_1$ and $C_2$ induce a 2-connected $(4,4)$-puzzle, any permutation involving only vertices in $C_1$ and $C_2$ can be generated in $O(n^2)$ shifts.\qed
\end{lemma}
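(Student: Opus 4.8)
The plan is to leverage the fact, established in the proof of \cref{2-ab-puz}, that the configuration group $H=\langle\mA,\mB\rangle$ of the 2-connected $(4,4)$-puzzle is a very large but \emph{proper} subgroup of the symmetric group on $W:=C_1\cup C_2$. Writing $S_W$ for the group of all permutations of the six vertices of $W$, \cref{2-ab-puz}.1 tells us that $H\cong S_5$ has order $120$, hence index $6$ in $S_W$. Since $\mA$ and $\mB$ together connect all six vertices, $H$ is transitive on $W$, so it is one of the \emph{exotic} (transitive) copies of $S_5$ inside $S_6$; such a subgroup is maximal in $S_6$. Consequently, $\langle H,\rho\rangle=S_W$ for \emph{any} permutation $\rho\in S_W\setminus H$. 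This reduces the lemma to two tasks: (i) exhibit a single $\rho$ that acts only on the vertices of $W$ (fixing everything else) and lies outside $H$, and (ii) generate it in $O(n^2)$ shifts. Once this is done we are finished: because $S_W$ is a fixed finite group generated by $H\cup\{\rho\}$, every target permutation of $W$ is a product of a constant number of factors, each lying in $H$ (cost $O(n^2)$ by \cref{2-ab-puz}) or equal to $\rho^{\pm1}$ (cost $O(n^2)$), for a total of $O(n^2)$.

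To identify a convenient $\rho$, I would use the characterization from the proof of \cref{2-ab-puz}: a permutation $\pi$ lies in $H$ if and only if its image $\psi(\pi)$ under the outer automorphism $\psi$ of $S_6$ fixes the distinguished token. The key algebraic observation is that the outer automorphism of $S_6$ carries every $3$-cycle to a product of two disjoint $3$-cycles, and every transposition to a product of three disjoint transpositions~\cite{rotman}; in both cases the image is \emph{fixed-point-free}, so it necessarily moves the distinguished token. Hence \emph{every} $3$-cycle and \emph{every} transposition supported on $W$ automatically lies outside $H$. It therefore suffices to produce a single $3$-cycle whose support is contained in $W$.

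Producing such a $3$-cycle is where the third cycle enters. Since $\hat G[\calC']$ is connected and $C_1,C_2$ are already properly interconnected, $C_3$ is properly interconnected with $C_1$ or $C_2$; say with $C_2$. Then $C_2$ together with the shift $\gamma$ along $C_3$ forms a $1$- or $2$-connected puzzle on $C_2\cup C_3$, and in the generic case \cref{1-ab-puz} and \cref{2-ab-puz} guarantee that $\langle\mB,\gamma\rangle$ contains the alternating group $A_{C_2\cup C_3}$. In particular it contains every $3$-cycle on three of the four vertices of $C_2$; such a $3$-cycle has support inside $C_2\subseteq W$ and is obtained in $O(n^2)$ shifts, completing the argument. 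The main obstacle is the single degenerate configuration in which $\{C_2,C_3\}$ \emph{itself} induces an exceptional $2$-connected $(4,4)$-puzzle --- a chain of three $4$-cycles, which forces $n\le 8$. There $\langle\mB,\gamma\rangle$ is again an exotic $S_5$ and, by the very observation above, contains no $3$-cycle at all, so the $3$-cycle cannot be extracted from $\{C_2,C_3\}$ alone. I would dispatch this remaining case separately, either by exhibiting an explicit element of $S_W\setminus H$ built from all three cycles or by checking the finitely many resulting small puzzles directly; this is the only genuinely delicate part of the proof.
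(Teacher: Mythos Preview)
Your high-level strategy matches the paper's: show that $H=\langle\alpha,\beta\rangle$ is a proper subgroup of $S_W$, then use $C_3$ (via \cref{1-ab-puz,2-ab-puz}) to manufacture one extra permutation supported on $W$ that escapes $H$, whence all of $S_W$ is reached in $O(n^2)$ shifts. The technical packaging differs: you invoke the maximality of the transitive $S_5$ in $S_6$ and the cycle-type swap $(3,1^3)\leftrightarrow(3^2)$ under the outer automorphism to conclude that \emph{any} $3$-cycle on $W$ suffices; the paper instead aims for a transposition $(3\ x)$ and observes concretely that such a transposition together with the $5$-cycle $\gamma_1=\alpha^{-1}\beta=(1\ 4\ 5\ 6\ 2)$ forms a $1$-connected $(2,5)$-puzzle on $W$, so \cref{1-ab-puz} gives $S_W$ directly. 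Both routes are valid; yours is more conceptual, the paper's avoids appealing to maximality.

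Where your proposal falls short is exactly where you flag it: the ``degenerate'' case in which $C_3$ also has length~$4$ and is $2$-connected with $C_2$ (so $\{C_2,C_3\}$ is itself a $(4,4)$-puzzle) is the entire substance of the lemma, and you do not resolve it. Saying it is a finite check is true but not a proof. The paper carries out this case analysis explicitly: after noting that $C_3$ must contain a vertex $7\notin W$, it lists the eight possible forms of $\tau$ up to symmetry, and for each exhibits either a non-degenerate puzzle built from $\tau$ and $\gamma_1$ or $\gamma_2$ (e.g.\ $\tau=(3\ 4\ 7\ 8)$ with $\gamma_1$ gives a $1$-connected $(4,5)$-puzzle), or a short explicit word in $\alpha,\beta,\tau$ equal to a transposition $(3\ x)$ (e.g.\ for $\tau=(2\ 6\ 3\ 7)$ one has $(3\ 4)=\alpha^2\tau^2\alpha\tau^2$). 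This handful of direct computations is what actually completes the argument, and your proposal would need an analogous enumeration (producing a $3$-cycle rather than a transposition, in your framework) to be complete.
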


\begin{lemma}\label{pathlemma}
Let $V=\{1,\dots,n\}$, and let $W=(w_1,\dots,w_m)\in V^m$ be a sequence such that each element of $V$ appears in $W$ at least once, and any three consecutive elements of $W$ are distinct. Then, the set of 3-cycles $\calC=\{(w_{i-1}\ w_{i}\ w_{i+1})\mid 1< i< m\}$ can generate any even permutation of $V$ in $O(n^3)$ shifts.\qed
\end{lemma}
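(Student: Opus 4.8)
The plan is to prove that $\calC$ generates $A_n$ and then to make the argument constructive. For the qualitative part I would first observe that the hypothesis ``any three consecutive elements of $W$ are distinct'' forces consecutive pairs to be genuine edges, so the graph on $V$ with edge set $E_W=\{\{w_i,w_{i+1}\}\mid 1\le i<m\}$ is connected (it is traced out by a single walk) and spanning (every vertex occurs in $W$). Fix a spanning tree $T\subseteq E_W$. Since the triangles supporting the 3-cycles of $\calC$ cover $T$, their union is connected and spanning, and a standard fact then gives $\langle\calC\rangle=A_n$. As a sanity check, in the special case where each vertex appears in $W$ exactly once, the window 3-cycles are just a relabeling of $(1\ 2\ 3),(2\ 3\ 4),\dots$, so this is precisely \cref{p:basic}.3 and costs $O(n^2)$; all the difficulty lies in controlling the diameter when $W$ is long and revisits vertices.

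The engine is a local transport primitive. For any edge $\{c,c'\}\in E_W$ there is a position $i$ with $\{w_i,w_{i+1}\}=\{c,c'\}$, and the window 3-cycle centred at $i$ (or at $i+1$) is a member of $\calC$ containing the edge $\{c,c'\}$ together with a third vertex $e$; crucially, the triple-distinctness hypothesis guarantees $e\notin\{c,c'\}$. Given a 3-cycle $g=(a\ b\ c)$ already expressed as a word of length $s$ in $\calC^{\pm1}$, if this third vertex satisfies $e\notin\{a,b\}$, then conjugating $g$ by that window cycle (suitably oriented) fixes $a$ and $b$ while sending $c\mapsto c'$, and hence produces the 3-cycle $(a\ b\ c')$ in $s+O(1)$ shifts. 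In the remaining case $e\in\{a,b\}$ I would temporarily relocate the offending support point to a nearby safe vertex before conjugating and then restore it; two consecutive windows with four distinct support vertices form a $2$-connected $(3,3)$-puzzle and so generate $A_4$ on those vertices by \cref{2-ab-puz}.2, which supplies exactly the local freedom these detours require.

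Using this primitive I would generate an arbitrary 3-cycle $(x\ y\ z)$ by starting from the base 3-cycle $\sigma_2=(w_1\ w_2\ w_3)$ and transporting its three support points, one at a time, to $x$, $y$, $z$ along tree paths of length at most $n-1$, fixing the final orientation with $O(1)$ extra shifts since $(x\ z\ y)=(x\ y\ z)^{-1}$. Each of the $O(n)$ transport steps costs $O(1)$ shifts generically and $O(n)$ shifts when a detour is needed, so a single 3-cycle is obtained in $O(n^2)$ shifts. Finally, any even permutation of $V$ is a product of $O(n)$ 3-cycles (pair up the at most $n-1$ transpositions of a factorization, each pair being at most two 3-cycles); generating each factor as above and concatenating yields the target in $O(n)\cdot O(n^2)=O(n^3)$ shifts, as claimed.

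The main obstacle is the collision handling inside the transport primitive: sliding one support point across a walk edge while keeping the other two fixed, and routing it around those two, is only immediate when the ambient window 3-cycle avoids them. I would isolate this in a short sub-lemma resting on the local $A_4$ freedom above, showing that the occasional detours can always be realized in $O(n)$ shifts each and do not cascade. This is precisely where the triple-distinctness hypothesis does its essential work; the same sub-lemma also dispatches the degenerate positions with $w_{i-1}=w_{i+2}$, where the two consecutive windows coincide and simply contribute nothing new.
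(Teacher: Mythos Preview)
Your route is genuinely different from the paper's. The paper never transports support points along walk edges. Instead it orders the vertices of $V$ by \emph{first appearance} in $W$: writing $\mu(v)=\min\{i:w_i=v\}$ and listing $V$ as $\pi_1,\dots,\pi_n$ in increasing order of $\mu$, the window 3-cycle at position $\mu(\pi_{i+1})$ automatically has the form $(\pi_k\ \pi_{k'}\ \pi_{i+1})$ with $k,k'\le i$, by minimality of $\mu$. An induction on $i$ then shows that every 3-cycle on $\{\pi_1,\dots,\pi_i\}$ costs at most $3i$ shifts: the inductive hypothesis furnishes a 3-cycle $\sigma_3$ on the old vertices, and two short algebraic identities ($\sigma_1=\sigma_2\sigma_3\sigma_2^2$ when $\{j,j'\}\cap\{k,k'\}=\emptyset$, and $\sigma_1=\sigma_2\sigma_3$ when they meet in one point) combine it with the new window cycle $\sigma_2$. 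There is no collision analysis whatsoever---the first-appearance trick guarantees the two ``old'' entries of $\sigma_2$ are already under control---and the per-3-cycle cost is $O(n)$ rather than your $O(n^2)$; \cref{p:basic}.3 then finishes at $O(n^3)$.

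Your conjugation-transport scheme is plausible, but the sub-lemma you defer is exactly where the content is, and it is not short. Two issues are swept under the rug. First, you only check $e\notin\{a,b\}$; you also need $c'\notin\{a,b\}$, otherwise the conjugation moves a fixed support point. Second, when $E_W$ is itself a tree (e.g.\ $W$ a simple path with no repeats), paths are unique, so you cannot ``route around'' $a$ or $b$; you must relocate one of them first, and the restore step can then collide with the freshly moved $c$ or with the other fixed point. Your local $A_4$ observation is correct but only rearranges four consecutive vertices; it does not by itself prevent these detours from cascading, and you give no invariant that bounds the total detour cost by $O(n)$ per step. (One clean way to \emph{get} such an invariant is to always process the support point with the largest $\mu$-value first---which is precisely the paper's idea in disguise.) As written, the proposal is a reasonable plan, but the hard step is stated rather than proved.
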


\begin{theorem}\label{gen-puz}
The configuration group of a generalized $(n_1,n_2,\dots,n_k)$-puzzle is $A_n$ if $n_1$, $n_2$, \dots, $n_k$ are all odd, and it is $S_n$ otherwise. Any permutation in the configuration group can be generated in $O(n^5)$ shifts.
\end{theorem}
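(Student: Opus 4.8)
The plan is to reduce both cases to the single statement $A_n\subseteq H_\calC$ and then read off the group from the parity of the generators. A cycle of length $n_i$ is an even permutation precisely when $n_i$ is odd, so if every $n_i$ is odd then all generators lie in $A_n$ and $H_\calC\subseteq A_n$; if some $n_i$ is even, the corresponding cycle in $\calC$ is an odd permutation and $H_\calC\not\subseteq A_n$. Hence, once $A_n\subseteq H_\calC$ is established, we immediately get $H_\calC=A_n$ when all lengths are odd, and $A_n\subsetneq H_\calC\leq S_n$ — forcing $H_\calC=S_n$ — when some length is even. In the latter case a prescribed odd permutation $\pi$ is obtained by first generating the even permutation $\pi D^{-1}$, where $D\in\calC$ is an even-length cycle, and then performing one additional shift along $D$; this costs only $O(n)$ extra shifts on top of the bound for even permutations.

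To prove $A_n\subseteq H_\calC$ constructively, I would first establish a local generation statement: for every edge $\{C,C'\}$ of the interconnection graph $\hat G[\calC']$, every $3$-cycle on $V(C)\cup V(C')$ belongs to $H_\calC$ and can be realized in $O(n^2)$ shifts. Since $C$ and $C'$ are properly interconnected, the pair forms either a 1-connected or a 2-connected $(a,b)$-puzzle on $V(C)\cup V(C')$, and all the shifts used in \cref{1-ab-puz} and \cref{2-ab-puz} are legal shifts of the ambient puzzle that fix every vertex outside $V(C)\cup V(C')$. By those results the local configuration group contains the relevant alternating group, and hence all $3$-cycles on $V(C)\cup V(C')$ — with the sole exception of the 2-connected $(4,4)$-puzzle. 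In that case I would appeal to \cref{abc-puz}: because $\msize{V}>6$, the two $4$-cycles cover only $6$ of the vertices, so by connectivity of $\hat G[\calC']$ some third relevant cycle is properly interconnected with one of them, and \cref{abc-puz} then produces every permutation — in particular every $3$-cycle — on those $6$ vertices, again in $O(n^2)$ shifts.

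With the local statement in hand, I would assemble a global sequence and invoke \cref{pathlemma}. Traverse a spanning tree of the connected graph $\hat G[\calC']$ by an Euler-tour-type walk; whenever a relevant cycle is entered, list its vertices, and move between two interconnected cycles through a shared relevant vertex. This produces a sequence $W=(w_1,\dots,w_m)\in V^m$ in which every vertex of $V$ appears (property (3) of a generalized puzzle) and which can be ordered so that any three consecutive entries are distinct. The key point is that each consecutive triple $(w_{i-1},w_i,w_{i+1})$ is confined to $V(C)\cup V(C')$ for some edge $\{C,C'\}$ of $\hat G[\calC']$ — triples internal to a cycle lie within that cycle, hence within its union with any neighbor, while triples straddling a transition lie in the union of the two interconnected cycles — so by the local statement each $3$-cycle $(w_{i-1}\ w_i\ w_{i+1})$ is generatable in $O(n^2)$ shifts. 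Finally, \cref{pathlemma} writes any even permutation as a product of $O(n^3)$ of these $3$-cycles; replacing each factor by its $O(n^2)$-shift realization and concatenating yields all of $A_n$ in $O(n^3)\cdot O(n^2)=O(n^5)$ shifts.

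The main obstacle is the construction of $W$: it must simultaneously cover all of $V$, keep every window of three entries distinct (so that each induced $3$-cycle is nondegenerate), and keep every such window inside a single properly interconnected pair (so that the local statement applies). Arranging all three conditions at once around the junctions of the tree traversal — and in particular accommodating short relevant cycles — is the delicate part, whereas the group-theoretic content is entirely carried by \cref{1-ab-puz,2-ab-puz,abc-puz,pathlemma}. The other point that requires care is routing around the $(4,4)$ exception, which is exactly why the hypothesis $\msize{V}>6$ and \cref{abc-puz} are indispensable.
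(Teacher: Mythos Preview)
Your proposal is correct and follows essentially the same route as the paper: reduce to showing $A_n\subseteq H_\calC$ by parity, build a covering sequence $W$ whose consecutive triples each lie inside a single properly interconnected pair of relevant cycles, realize each such 3-cycle in $O(n^2)$ shifts via \cref{1-ab-puz,2-ab-puz} (with \cref{abc-puz} patching the $(4,4)$ exception), and finish with \cref{pathlemma}. The paper's proof differs only in that it spells out the ``delicate part'' you flag: it enforces the three-window conditions by having $W$ traverse a full loop of each relevant cycle before transitioning, and it handles separately the degenerate case of a relevant $2$-cycle that is a leaf in $\hat G[\calC']$.
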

\begin{proof}
Observe that it suffices to prove that the given cycles can generate any even permutation in $O(n^5)$ shifts. Indeed, if all cycles have odd length, they cannot generate any odd permutation. On the other hand, if there is a cycle of even length and we want to generate an odd permutation $\pi$, we can shift tokens along that cycle, obtaining an odd permutation $\sigma$, and then we can generate the even permutation $\sigma^{-1}\pi$ in $O(n^5)$ shifts, obtaining $\pi$.

Let us fix a set of $k'\geq 3$ relevant cycles $\calC'\subseteq \calC$: we will show how to generate any even permutation by shifting tokens only along relevant cycles. By properties~(2) and~(3) of generalized puzzles, there exists a walk $W$ on $G$ that visits all vertices (possibly more than once), traverses only edges of relevant cycles, and transitions from one relevant cycle to another only if they are properly interconnected, and only through a relevant vertex shared by them. We will now slightly modify $W$ so that it satisfies the hypotheses of \cref{pathlemma}, as well as some other conditions. Namely, if $w_{i-1}$, $w_{i}$, $w_{i+1}$ are any three vertices that are consecutive in $W$, we would like the following conditions to hold:
\begin{enumerate}
\item[(1)] $w_{i-1}$, $w_{i}$, $w_{i+1}$ are all distinct (this is the condition required by \cref{pathlemma});
\item[(2)] either $w_{i-1}$ and $w_{i}$ are in the same relevant cycle, or $w_{i}$ and $w_{i+1}$ are in the same relevant cycle;
\item[(3)] $w_{i-1}$ and $w_{i+1}$ are either in the same relevant cycle, or in two properly interconnected relevant cycles.
\end{enumerate}
To satisfy all conditions, it is sufficient to let $W$ do a whole loop around a relevant cycle before transitioning to the next (note that \cref{pathlemma} applies regardless of the length of $W$). The only case where this is not possible is when $W$ has to go through a relevant 2-cycle $C=(u_1\ u_2)$ that is a leaf in the induced subgraph $\hat G[\calC']$, such that $C$ shares exactly one relevant vertex, say $u_{1}$, with another relevant cycle $C'=(v_0\ u_{1}\ v_1\ v_2\ \dots)$. To let $W$ cover $C$ in a way that satisfies the above conditions, we set either $W=(\dots, v_0, u_1, u_2, v_1,\dots)$ or $W=(\dots, v_1, u_1, u_2, v_0,\dots)$: that is, we skip $u_1$ after visiting $u_2$. After this modification, $W$ is no longer a walk on $G$, but it satisfies the hypotheses of \cref{pathlemma}, as well as the three conditions above.

We will now show that the 3-cycle $(w_{i-1}\ w_{i}\ w_{i+1})$ can be generated in $O(n^2)$ shifts, for all $1<i<\msize{W}$. By \cref{pathlemma}, we will therefore conclude that any even permutation of $V$ can be generated in $O(n^2)\cdot O(n^3)=O(n^5)$ shifts. Due to conditions~(2) and~(3), we can assume without loss of generality that $w_{i-1}$ and $w_{i}$ are both in the same relevant cycle $C_1$, and that $w_{i+1}$ is either in $C_1$ or in a different relevant cycle $C_2$ which is properly interconnected with $C_1$. In the first case, by property~(1) of generalized puzzles, there exists another relevant cycle $C_2$ properly interconnected with $C_1$. So, in all cases, $C_1$ and $C_2$ induce a 1-connected or a 2-connected $(\msize{C_1},\msize{C_2})$-puzzle.

That the 3-cycle $(w_{i-1}\ w_{i}\ w_{i+1})$ can be generated in $O(n^2)$ shifts now follows directly from \cref{1-ab-puz,2-ab-puz}, except if $\msize{C_1}=\msize{C_2}=4$ and $C_1$ and $C_2$ share exactly two vertices: indeed, the 2-connected $(4,4)$-puzzle is the only case where we cannot generate any 3-cycle. However, since we are assuming that $V>6$, there must be a third relevant cycle $C_3$, which is properly interconnected with $C_1$ or $C_2$. Our claim now follows from \cref{abc-puz}.\qed
\end{proof}

\section{NP-Hardness for Puzzles with Two Colors}
\label{sec:4}
In this section, we show that the 2-Colored Token Shift problem is NP-hard.
That is, for a graph $G=(V,E)$, cycle set $\calC$, two token placements $f_0$ and $f_t$ for $G$, and a non-negative integer $\ell$, it is NP-hard to determine if $\dist(f_0,f_t)\le \ell$.
\begin{theorem}
\label{th:2color}
The 2-Colored Token Shift problem is NP-hard.
\end{theorem}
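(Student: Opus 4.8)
The plan is to prove NP-hardness by a polynomial-time reduction from 3-SAT. Given a formula $\varphi$ with variables $x_1,\dots,x_r$ and clauses $c_1,\dots,c_s$, I would construct a graph $G=(V,E)$, a cycle set $\calC$, two compatible $2$-colored token placements $f_0,f_t$, and a bound $\ell$, so that $\dist(f_0,f_t)\le\ell$ holds if and only if $\varphi$ is satisfiable. The guiding principle is to treat each shift as one atomic move and to engineer the gadgets so that the \emph{cheapest} way to transform $f_0$ into $f_t$ mirrors first choosing a truth assignment and then certifying that every clause is satisfied. Since the configuration space of arbitrary cycle arrangements is flexible (reachability is never the issue once enough cycles are present), all the difficulty is pushed into the \emph{number} of moves, which is exactly what $\ell$ constrains.

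Concretely, for each variable $x_i$ I would build a cycle $X_i$ whose tokens in $f_0$ and $f_t$ differ in a way repairable by rotating $X_i$ in either of two directions, and only in those two ways within a fixed per-variable budget: rotating one way routes a distinguished color-$1$ token toward the vertices $X_i$ shares with the positive occurrences of $x_i$, and the other way toward the negative occurrences. For each clause $c_j$ I would attach a small gadget that is \emph{already} in its target $f_t$-state precisely when at least one of its three literal cycles has been rotated in the satisfying direction, and that otherwise requires extra shifts to repair. Because the preliminaries require every edge to be spanned by a cycle in $\calC$, I would pad the construction with short auxiliary cycles whose $f_0$ and $f_t$ states coincide, so that they contribute nothing to the optimal count. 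The bound $\ell$ is then set to the total cost of one rotation pass over all variable cycles plus the unavoidable baseline cost of the clause gadgets, calibrated so that $\ell$ is met exactly when some assignment satisfies all clauses.

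Completeness should be routine: given a satisfying assignment, rotate each $X_i$ in the prescribed direction, which simultaneously repairs the variable cycles and, by construction, leaves every clause gadget already solved, for a total of at most $\ell$ shifts. The main obstacle is \emph{soundness}, i.e. the matching lower bound: I must show that no shifting sequence of length $\le\ell$ can reach $f_t$ unless it encodes a consistent, satisfying assignment. The real danger is ``unintended shortcuts'' — sequences that exploit shared vertices between cycles to fix a clause gadget without committing each variable to a single truth value, or that repair several gadgets at once more cheaply than the intended moves. To rule these out I would make the gadgets rigid by (i) choosing cycle lengths and token patterns so that any rotation advancing a clause gadget necessarily drives a variable cycle a full, effectively irreversible step in one direction, and (ii) defining a potential that charges each misplaced token, so that a single shift can lower the total charge only by a bounded amount, yielding a clean lower bound on $\dist(f_0,f_t)$ that is tight only for assignment-like sequences. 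Constructing this charging argument and verifying that it survives every interaction permitted by the shared vertices is where the work concentrates; the rest is bookkeeping.
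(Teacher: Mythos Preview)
Your proposal is not a proof; it is a plan whose hardest step is explicitly left undone. You concede that the entire soundness direction rests on a potential/charging argument that you have not yet constructed, and that ``unintended shortcuts'' through shared vertices must be ruled out case by case. Until those gadgets are written down and the lower bound is actually proved, there is no theorem. In this kind of reduction the lower bound is the whole content: variable cycles that can be rotated in two directions, glued to clause gadgets at shared vertices, are exactly the setting where cheap hybrid sequences (partial rotations, back-and-forth moves, or shifts that touch several gadgets at once) tend to defeat na\"ive charging schemes. Nothing in your sketch explains why a single shift cannot decrease the charge by more than the intended amount, nor why the per-variable ``either/or'' rotation is forced rather than merely sufficient.

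The paper sidesteps all of these difficulties by reducing from \textsc{3-Dimensional Matching} instead of \textsc{3-SAT}, and by designing the instance so that a one-line counting argument gives the lower bound. There are exactly $\ell=3n$ black tokens that must leave a designated set of ``yellow'' vertices, and every cycle in $\calC$ is arranged so that any single shift moves at most one black token from a yellow vertex to a non-yellow vertex. Hence any length-$\ell$ solution must spend every shift pushing one black token out and none back in; from this, the structure of an optimal sequence (and hence a matching) falls out immediately, with no potential function and no shortcut analysis. If you want to salvage your \textsc{3-SAT} route, you would need gadgets admitting a comparably tight invariant; absent that, the paper's \textsc{3DM} reduction is both simpler and complete.
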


\begin{proof}
We will give a polynomial-time reduction from the NP-complete problem 3-Dimensional Matching, or 3DM~\cite{garey1979computers}: given three disjoint sets $X$, $Y$, $Z$, each of size $m$, and a set of triplets $T\subseteq X\times Y\times Z$, does $T$ contain a matching, i.e., a subset $M\subseteq T$ of size exactly $m$ such that all elements of $X$, $Y$, $Z$ appear in $M$?

Given an instance of 3DM $(X,Y,Z,T)$, with $n=\msize{T}$, we construct the instance of the 2-Colored Token Shift problem illustrated in \cref{fig:reduction}.

\begin{figure}
\centering
\includegraphics[width=\textwidth]{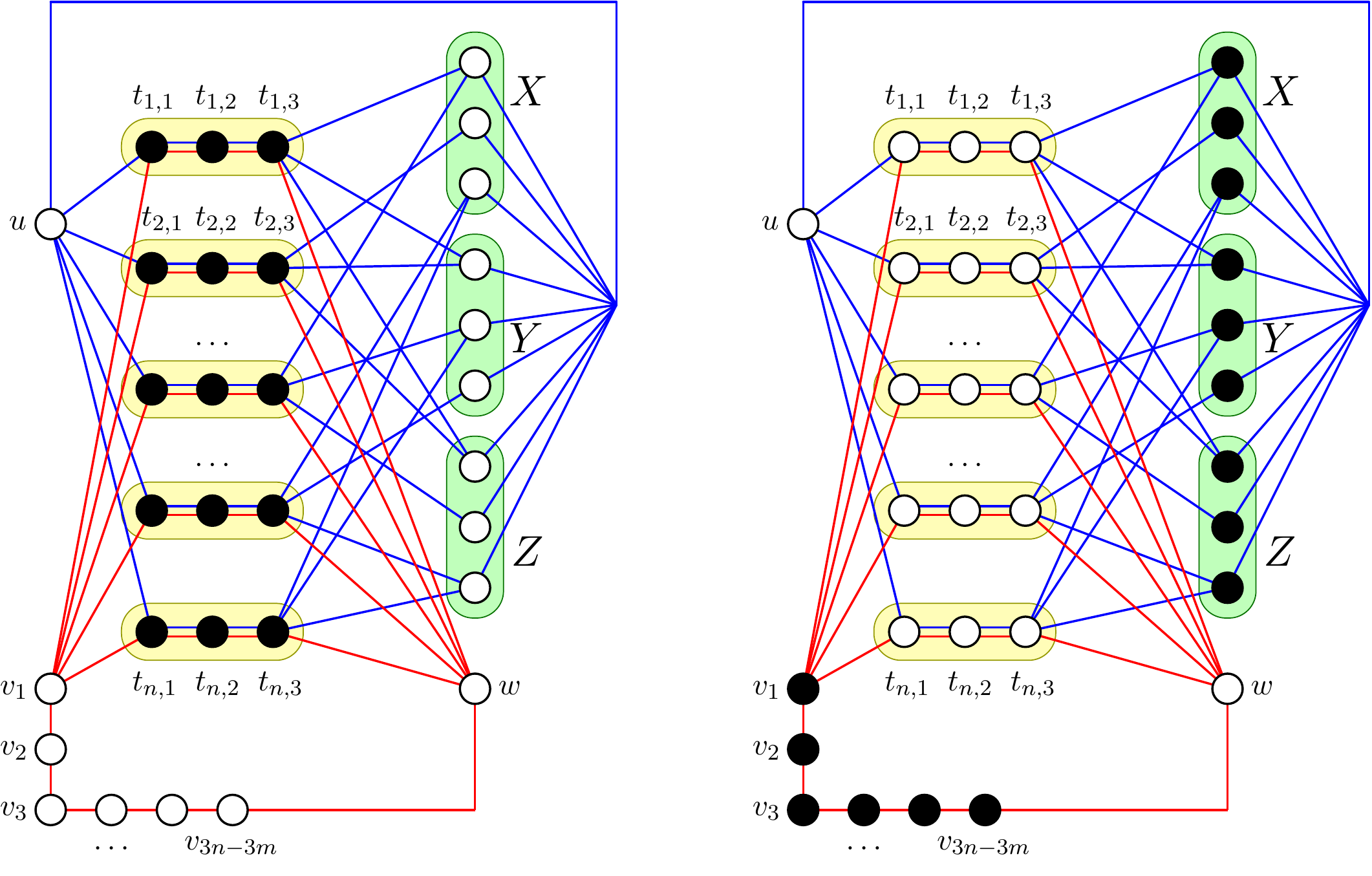}
\caption{The initial token placement $f_0$ (left) and the final token placement $f_t$ (right)}
\label{fig:reduction}
\end{figure}

The vertex set of $G=(V,E)$ includes the sets $X$, $Y$, $Z$ (shown with a green background in the figure: these will be called \emph{green vertices}), as well as the vertex $u$. Also, for each triplet $\hat t_i=(x,y,z)\in T$, with $1\leq i\leq n$, the vertex set contains three vertices $t_{i,1}$, $t_{i,2}$, $t_{i,3}$ (shown with a yellow background in the figure: these will be called \emph{yellow vertices}), and the cycle set $\calC$ has the three cycles $(u, t_{i,1}, t_{i,2}, t_{i,3}, x)$, $(u, t_{i,1}, t_{i,2}, t_{i,3}, y)$, and $(u, t_{i,1}, t_{i,2}, t_{i,3}, z)$ (drawn in blue in the figure). Finally, we have the vertex $w$, and the vertices $v_1$, $v_2$, \dots, $v_{3n-3m}$; for each $i\in \{1,2,\dots, n\}$, the cycle set $\calC$ contains the cycle $(t_{i,3}, t_{i,2}, t_{i,1}, v_1, v_2, \dots, v_{3n-3m}, w)$ (drawn in red in the figure). In the initial token placement $f_0$, there are black tokens on the $3n$ vertices of the form $t_{i,j}$, and white tokens on all other vertices. In the final token placement $f_t$, there is a total of $3m$ black tokens on all the vertices in $X$, $Y$, $Z$, plus $3n-3m$ black tokens on $v_1$, $v_2$, \dots, $v_{3n-3m}$; all other vertices have white tokens. With this setup, we let $\ell=3n$.

It is easy to see that, if the 3DM instance has a matching $M=\{\hat t_{i_1},\hat t_{i_2},\dots, \hat t_{i_m}\}$, then $\dist(f_0,f_t)\leq \ell$. Indeed, for each $\hat t_{i_j}=(x_j,y_j,z_j)$, with $1\leq j\leq m$, we can shift tokens along the three blue cycles containing the yellow vertices $t_{i_j,1}$, $t_{i_j,2}$, $t_{i_j,3}$, thus moving their three black tokens into the green vertices $x_j$, $y_j$, and $z_j$. Since $M$ is a matching, these $3m$ shifts eventually result in $X$, $Y$, and $Z$ being covered by black tokens. Finally, we can shift the $3n-3m$ black tokens corresponding to triplets in $T\setminus M$ along red cycles, moving them into the vertices $v_1$, $v_2$, \dots, $v_{3n-3m}$. Clearly, this is a shifting sequence of length $3n=\ell$ from $f_0$ to $f_t$.

We will now prove that, assuming that $\dist(f_0,f_t)\leq\ell$, the 3DM instance has a matching. Note that each shift, no matter along which cycle, can move at most one black token from a yellow vertex to a non-yellow vertex. Since in $f_0$ there are $\ell=3n$ black tokens on yellow vertices, and in $f_t$ no token is on a yellow vertex, it follows that each shift must cause exactly one black token to move from a yellow vertex to a non-yellow vertex, and no black token to move back into a yellow vertex.

This implies that no black token should ever reach vertex $u$: if it did, it would eventually have to be moved to some other location, because $u$ does not hold a black token in $f_t$. However, the black token in $u$ cannot be shifted back into a yellow vertex, and therefore it will be shifted into a green vertex along a blue cycle. Since every shift must cause a black token to leave the set of yellow vertices, such a token will move into $u$: we conclude that $u$ will always contain a black token, which is a contradiction. Similarly, we can argue that the vertex $w$ should never hold a black token.

Let us now focus on a single triplet of yellow vertices $t_{i,1}$, $t_{i,2}$, $t_{i,3}$. Exactly three shifts must involve these vertices, and they must result in the three black tokens leaving such vertices. Clearly, this is only possible if the three black tokens are shifted in the same direction. If they are shifted in the direction of $t_{i,3}$ (i.e., rightward in \cref{fig:reduction}), they must move into green vertices (because they cannot go into $w$); if they are shifted in the direction of $t_{i,1}$ (i.e., leftward in \cref{fig:reduction}), they must move into $v_1$ (because they cannot go into $u$).

Note that, if a black token ever reaches a green vertex, it can no longer be moved: any shift involving such a token would move it back into a yellow vertex or into $u$. It follows that the only way of filling all the green vertices with black token is to select a subset of exactly $m$ triplets of yellow vertices and shift each of their black tokens into a different green vertex. These $m$ triplets of yellow vertices correspond to a matching for the 3DM instance.\qed
\end{proof}

In the above reduction, we can easily observe that the final token placement $f_t$ can always be reached from the initial token placement $f_0$ in a polynomial number of shifts. Therefore, for this particular set of instances, the 2-Colored Token Shift problem is in NP. The same is also true of the puzzles introduced in \cref{sec:3}, due to the polynomal upper bound given by \cref{gen-puz}. However, we do not know whether this is true for the $c$-Colored Token Shift problem in general, even assuming $c=2$. A theorem of Helfgott and Seress~\cite{helfgott} implies that, if $f_0\simeq f_t$, the distance between $f_0$ and $f_t$ has a quasi-polynomial upper bound; this, however, is insufficient to conclude that the problem is in NP. On the other hand, it is not difficult to see that the $c$-Colored Token Shift problem is in PSPACE; characterizing its computational complexity is left as an open problem. It would also be interesting to establish if the problem remains NP-hard when restricted to planar graphs or to graphs of constant maximum degree.

\newpage

\section*{Appendix}
\subsection*{Additional Figures}
\begin{figure}
\centering
\includegraphics[height=4.35cm]{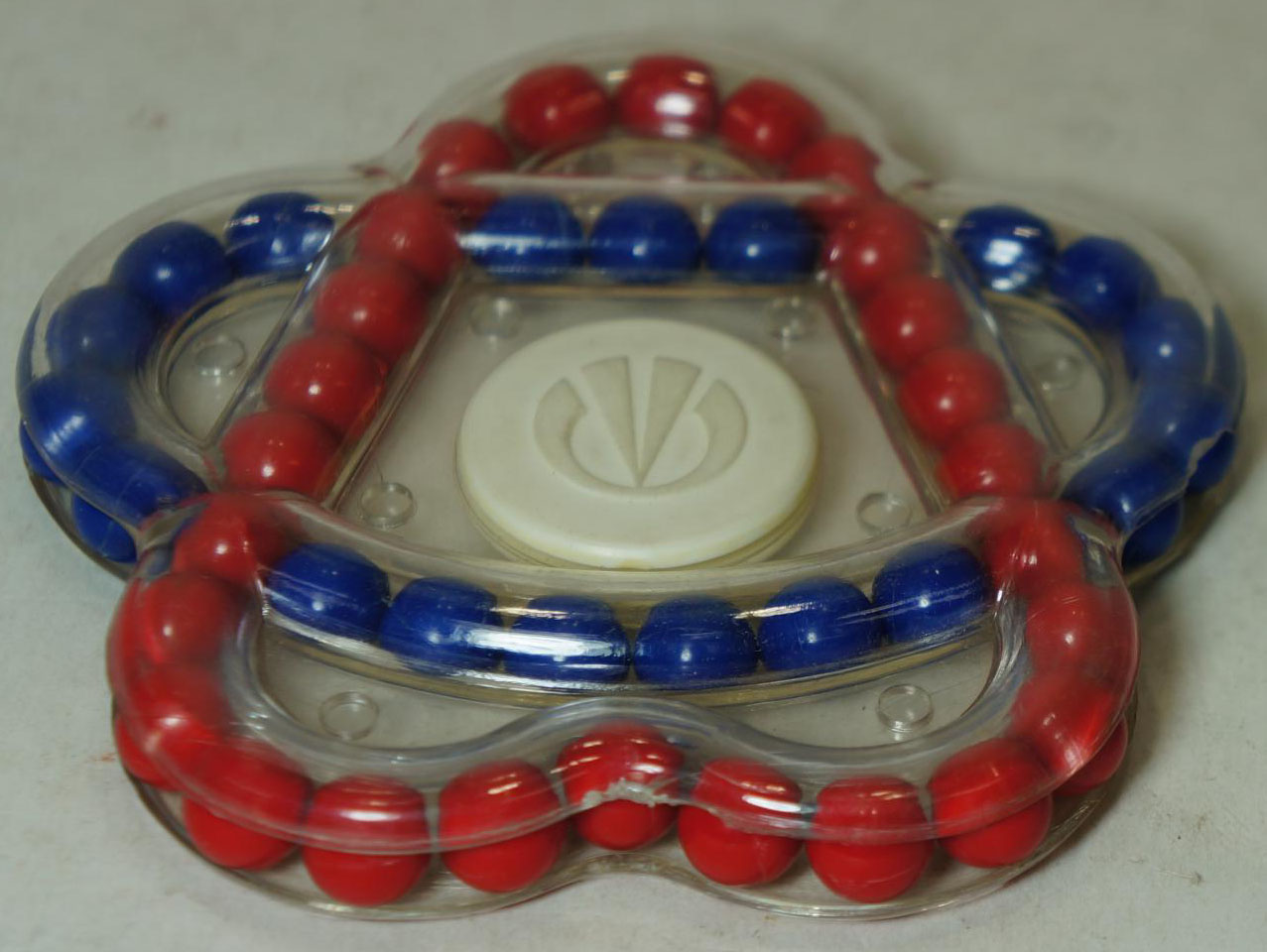}
\hspace{5mm}
\includegraphics[height=4.35cm]{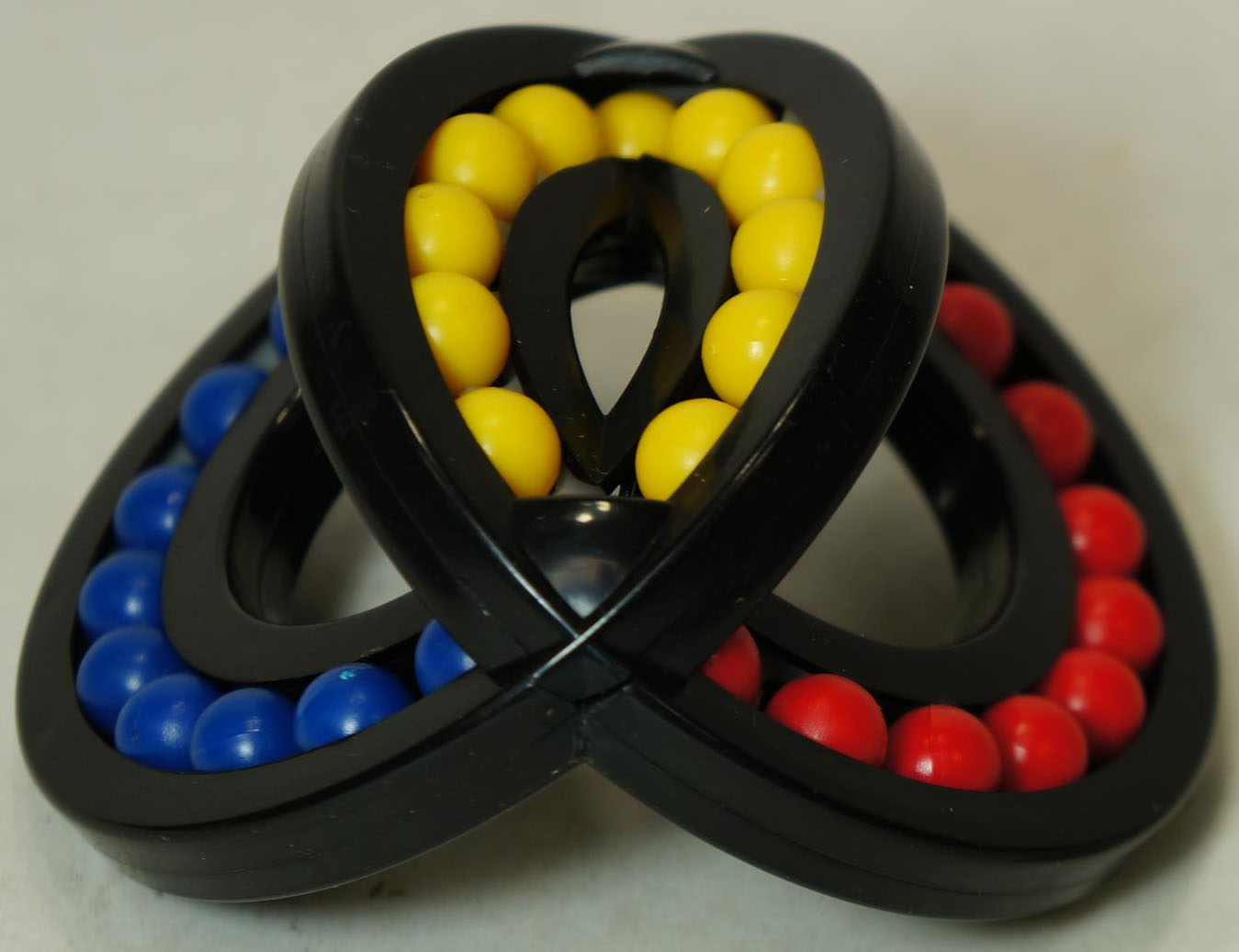}
\caption{Some cyclic shift puzzles with two (not properly interconnected) cycles}
\end{figure}

\begin{figure}
\centering
\includegraphics[height=4.5cm]{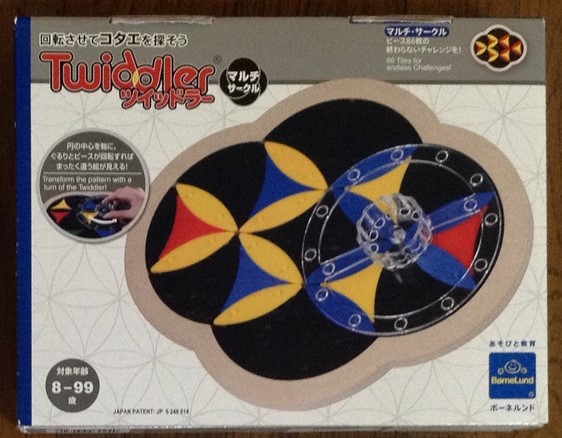}
\hspace{5mm}
\includegraphics[height=4.5cm]{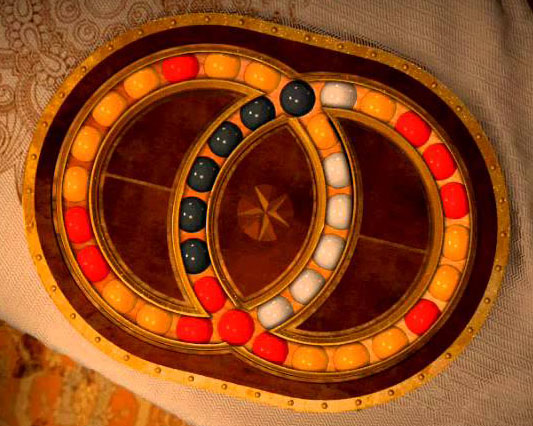}
\caption{More cyclic shift puzzles: Twiddler (left) and a puzzle found in the video game Haunted Manor 2 (right)}
\end{figure}

\subsection*{Missing Proofs}
\setcounter{lemma}{2}
\begin{lemma}
In a generalized puzzle with three relevant cycles, $\calC'=\{C_1,C_2,C_3\}$, such that $C_1$ and $C_2$ induce a 2-connected $(4,4)$-puzzle, any permutation involving only vertices in $C_1$ and $C_2$ can be generated in $O(n^2)$ shifts.
\end{lemma}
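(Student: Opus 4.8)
The plan is to leverage the special structure found in \cref{2-ab-puz}. By that theorem, the group $G_{12}=\Angle{C_1,C_2}$ generated by the two $4$-cycles is isomorphic to $S_5$; since $C_1$ and $C_2$ together connect all six vertices of $C_1\cup C_2$, this is a \emph{transitive} copy of $S_5$ sitting inside $\mathrm{Sym}(C_1\cup C_2)\cong S_6$, and such a subgroup is a maximal subgroup of $S_6$. Consequently, to generate every permutation supported on $C_1\cup C_2$ it suffices to exhibit a single permutation of these six vertices (fixing all other vertices of $G$) that lies outside $G_{12}$: by maximality, adjoining any such permutation to $G_{12}$ already yields the full symmetric group on $C_1\cup C_2$, after which an arbitrary target permutation is a bounded-length word in the generators of $G_{12}$ and this one extra element.

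A clean supply of elements outside $G_{12}$ is given by the $3$-cycles. Recall from the proof of \cref{2-ab-puz} that a permutation $\pi$ belongs to $G_{12}$ exactly when its image $\psi(\pi)$ under the outer automorphism fixes the vertex $4$; since $\psi$ carries each $3$-cycle to a product of two disjoint $3$-cycles, which moves all six vertices, no $3$-cycle supported on $C_1\cup C_2$ lies in $G_{12}$. Hence it is enough to produce a single $3$-cycle on three vertices of $C_1\cup C_2$.

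The third relevant cycle $C_3$ is, by hypothesis, properly interconnected with $C_1$ or $C_2$, say with $C_1$. If the pair $\{C_1,C_3\}$ is not itself a $2$-connected $(4,4)$-puzzle --- for instance if they share a single vertex, or if $\msize{C_3}\neq 4$ --- then by \cref{1-ab-puz,2-ab-puz} the group $\Angle{C_1,C_3}$ contains the alternating group on $C_1\cup C_3$, and in particular a $3$-cycle on three vertices of $C_1$; producing it costs $O(n^2)$ shifts, since $C_3$ may be long. The remaining, and genuinely harder, case is when $C_3$ is again a $4$-cycle sharing two consecutive vertices with $C_1$ (a ``book'' of $4$-cycles sharing an edge), so that $\Angle{C_1,C_3}$ is itself only a transitive $S_5$ and contains no $3$-cycle. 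Here I would instead conjugate an element of $G_{12}$ by $C_3$: for a suitable $g\in G_{12}$ that fixes the appropriate shared vertex, the conjugate $C_3^{-1}gC_3$ is again supported on $C_1\cup C_2$ (the conjugation merely relabels one shared vertex of $C_1$ as another vertex of $C_1\cup C_2$), yet it need not stay inside $G_{12}$. For example, when $C_1=(1\ 2\ 3\ 4)$, $C_2=(3\ 4\ 5\ 6)$ and $C_3=(3\ 4\ p\ q)$, the element $(C_2C_3)^{-1}(C_1C_3)=C_3^{-1}\,C_2^{-1}C_1\,C_3$ simplifies to the $5$-cycle $(1\ 2\ 6\ 5\ 3)$, which is supported on $C_1\cup C_2$ but lies outside $G_{12}$ (its image under $\psi$ does not fix the vertex $4$). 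Since all the cycles involved now have length $4$, this escaping element is produced in $O(1)$ shifts.

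The main obstacle is precisely this exceptional ``book'' configuration: the naive route through a single interconnected pair breaks down, and one must verify by a finite computation in a fixed symmetric group that conjugation by $C_3$ carries some element of $G_{12}$ to a confined permutation outside $G_{12}$. Once such an escaping permutation is in hand, maximality of $G_{12}$ gives the entire symmetric group on $C_1\cup C_2$; expressing a prescribed target as a bounded word in the $O(1)$-cost generators of $G_{12}$ and the (at most $O(n^2)$-cost) escaping element shows that any permutation supported on $C_1\cup C_2$ is reached in $O(n^2)$ shifts, as claimed.
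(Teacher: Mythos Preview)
Your argument is correct and takes a genuinely different route from the paper's.

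The paper proceeds concretely: it aims to build a transposition $(3\ x)$ with $x\in\{1,2,4,5,6\}$, observes that such a transposition together with the $5$-cycle $\gamma_1=\alpha^{-1}\beta$ induces a $1$-connected $(2,5)$-puzzle on $\{1,\dots,6\}$, and then invokes \cref{1-ab-puz}. When $C_3$ has length $\neq 4$ or is $1$-connected with $C_1$ or $C_2$, the transposition comes directly from \cref{1-ab-puz} or \cref{2-ab-puz}; otherwise the paper enumerates eight explicit placements of a $4$-cycle $C_3$ that is $2$-connected with $C_2$ (possibly overlapping $C_1$ further), and for each writes down a short word in $\alpha,\beta,\tau$ equal to a specific transposition.

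Your approach replaces this with a structural observation: $G_{12}\cong S_5$ is a \emph{maximal} transitive subgroup of $S_6$, and under $\psi$ every $3$-cycle maps to a $(3,3)$-cycle, hence no $3$-cycle lies in $G_{12}$. So a single confined element outside $G_{12}$ suffices, and a $3$-cycle on vertices of $C_1$ does the job whenever $\langle C_1,C_3\rangle$ is not itself a $2$-connected $(4,4)$-pair. In the remaining ``book'' case you conjugate an element of $G_{12}$ by $C_3$; your example $C_3^{-1}(\beta^{-1}\alpha)C_3=(1\ 2\ 6\ 5\ 3)$ checks out (and $\psi$ sends it to $(1\ 4\ 6\ 2\ 3)$, which moves~$4$). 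What your write-up leaves implicit is that the two shared vertices of $C_3$ with $C_1$ need not be $\{3,4\}$, and the two remaining vertices of $C_3$ may or may not lie in $\{5,6\}$; each choice requires picking a suitable $g\in G_{12}$ fixing the right vertex so that the conjugate stays supported on $\{1,\dots,6\}$ and lands outside $G_{12}$. You acknowledge this as ``a finite computation,'' which is fair, but it is precisely the eight-case check the paper spells out. The maximality viewpoint is cleaner conceptually and explains \emph{why} one extra element is enough; the paper's explicit transposition list makes the proof fully self-contained without appealing to the subgroup structure of $S_6$.
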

\begin{proof}
Let $\alpha=(1\ 2\ 3\ 4)$ and $\beta=(3\ 4\ 5\ 6)$ be the permutations corresponding to shifting tokens along $C_1$ and $C_2$, respectively. As in \cref{s:3.1}, we set $\mCo=\mAi\mB=(1\ 4\ 5\ 6\ 2)$ and $\mCt=\mA\mBi=(1\ 2\ 3\ 6\ 5)$. Since we are assuming that $\msize{V}>6$, there must be a seventh vertex, and shifting along $C_3$ corresponds to a permutation of the form $\tau=(\dots\ 7\ \dots)$.

We will prove that it is always possible to generate a transposition of the form $(3\ x)$, with $x\in\{1,2,4,5,6\}$, in $O(n^2)$ shifts. Indeed, such a transposition, together with the 5-cycle $\mCo$, induces a 1-connected $(2,5)$-puzzle on $\{1,2,3,4,5,6\}$. Our lemma will thus follow from \cref{1-ab-puz} and the fact that, in a 1-connected $(2,5)$-puzzle, the distance between any two token placements is bounded by a constant.

If $\msize{C_3}\neq 4$, or if $C_3$ is 1-connected with $C_1$ or $C_2$, then the transposition $(3\ 4)$ can be generated in $O(n^2)$ shifts, due to \cref{1-ab-puz,2-ab-puz}. So, we may assume that $\msize{C_3}=4$, and $C_3$ is properly interconnected with $C_2$ and shares exactly two vertices with it. Perhaps, $C_3$ shares at least two vertices with $C_1$, as well. The only possible configurations, up to symmetry, are the following:
\begin{enumerate}
\item[(1)] $\tau=(3\ 4\ 7\ 8)$. Then, $\tau$ and $\gamma_1$ induce a 1-connected $(4,5)$-puzzle on $V$, and can generate the transposition $(3\ 4)$ by \cref{1-ab-puz}.
\item[(2)] $\tau=(5\ 6\ 7\ 8)$. Then, $\tau$ and $\gamma_2$ induce a 2-connected $(4,5)$-puzzle on $V\setminus\{4\}$, and can generate the transposition $(3\ 5)$ by \cref{2-ab-puz}.
\item[(3)] $\tau=(1\ 7\ 3\ 4)$. In this case, $(3\ 2) = \tau^{-2}\alpha\tau\alpha$.
\item[(4)] $\tau=(1\ 3\ 4\ 7)$. In this case, $(3\ 4) = \alpha\beta^{-1}\alpha^{-1}\tau\beta\tau^2$.
\item[(5)] $\tau=(1\ 3\ 6\ 7)$. In this case, $(3\ 5) = \beta^{-1}\alpha\tau^{-1}\alpha\beta\tau^2$.
\item[(6)] $\tau=(1\ 6\ 3\ 7)$. In this case, $(3\ 1) = \alpha\beta\alpha^{-1}\beta\tau^{-1}\beta\tau$.
\item[(7)] $\tau=(2\ 6\ 3\ 7)$. In this case, $(3\ 4) = \alpha^2\tau^2\alpha\tau^2$.
\item[(8)] $\tau=(2\ 3\ 6\ 7)$. In this case, $(3\ 1) = \tau^{-1}\beta^{-1}\alpha\beta\alpha^{-1}\tau\alpha$.\qed
\end{enumerate}
\end{proof}

\begin{lemma}
Let $V=\{1,\dots,n\}$, and let $W=(w_1,\dots,w_m)\in V^m$ be a sequence such that each element of $V$ appears in $W$ at least once, and any three consecutive elements of $W$ are distinct. Then, the set of 3-cycles $\calC=\{(w_{i-1}\ w_{i}\ w_{i+1})\mid 1< i< m\}$ can generate any even permutation of $V$ in $O(n^3)$ shifts.
\end{lemma}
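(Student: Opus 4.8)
The plan is to reduce the statement to \cref{p:basic}.3. Fix an arbitrary bijection identifying $V$ with $\{1,2,\dots,n\}$. By \cref{p:basic}.3, every even permutation of $V$ can be written as a product of $O(n^2)$ of the consecutive $3$-cycles $(1\ 2\ 3),(2\ 3\ 4),\dots,(n-2\ n-1\ n)$. Hence it suffices to establish the following claim: any single $3$-cycle on three prescribed vertices of $V$ can be generated from $\calC$ in $O(n)$ shifts. Granting this, I realize each of the $O(n^2)$ factors supplied by \cref{p:basic}.3 as an $O(n)$-shift word over $\calC$, and then concatenate these words; the concatenation is a shifting sequence of length $O(n^2)\cdot O(n)=O(n^3)$ whose net effect is the target even permutation. (The cases $n\le 3$ are immediate, as then $\msize{A_n}\le 3$ and a single generator already suffices.)

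To prove the claim I work with the \emph{window graph} $\Gamma$ on vertex set $V$, in which two vertices are joined by an edge whenever they occur together in some triple $\{w_{i-1},w_i,w_{i+1}\}$ with $1<i<m$. Because every element of $V$ occurs in $W$ and consecutive triples overlap in two vertices, $\Gamma$ is connected and spanning, so any two vertices lie at $\Gamma$-distance at most $n-1$. Each generator $(w_{i-1}\ w_i\ w_{i+1})$ is a $3$-cycle supported on a triangle of $\Gamma$, and the central tool is \emph{transport by conjugation}: given an already-built $3$-cycle $\sigma$ with support $S$ and a window $\{z,z',\ast\}$ with $z\in S$ and $z'\notin S$, conjugating $\sigma$ by a suitably oriented $3$-cycle of that window produces a $3$-cycle whose support is $S$ with $z$ replaced by $z'$. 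A short case analysis (on whether $\ast$ lies in $S$) shows that the correct orientation always exists and that the two non-moving vertices of $S$ are left in place; crucially, since conjugations nest as $h_t\cdots h_1\,(\cdot)\,h_1^{-1}\cdots h_t^{-1}$, performing $t$ such transports costs only $O(t)$ shifts in total.

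With this mechanism, I start from any generator, whose support is a window triple, and slide its three support vertices one at a time, along shortest $\Gamma$-paths, until the support equals the prescribed target triple $\{a,b,c\}$. This uses $O(n)$ single-edge slides, hence $O(n)$ shifts, after which at most one further $O(1)$ adjustment (squaring the resulting $3$-cycle if necessary) fixes its cyclic orientation to yield exactly the prescribed $3$-cycle. This proves the claim, and with it the lemma.

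The hard part will be making the sliding step fully rigorous. While transporting one support vertex I must avoid landing it on either of the two vertices currently occupied by the rest of the support, even though a $\Gamma$-path toward $\{a,b,c\}$ may be forced into their vicinity. The decisive observation is that a window containing a fixed support vertex \emph{in its middle position} still induces a $\Gamma$-edge between its two outer vertices, so a support vertex may be routed \emph{through} windows that meet the occupied vertices without disturbing them; only actually landing on an occupied vertex is forbidden. The transport therefore reduces to a three-pebble motion on the connected graph $\Gamma$ (which has at least four vertices once $n\ge 4$), always achievable in $O(n)$ moves by first relocating a blocking pebble when necessary. Verifying this pebble-motion bound, together with the orientation bookkeeping in the conjugation step, is where most of the care is needed.
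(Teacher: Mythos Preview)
Your reduction to \cref{p:basic}.3 via ``every 3-cycle in $O(n)$ shifts'' matches the paper's, but your route to that intermediate claim is genuinely different. The paper orders $V$ by first occurrence in $W$ (so $\pi_1=w_1$, $\pi_2=w_2$, $\pi_3=w_3$, and so on) and shows by induction that any 3-cycle on $\{\pi_1,\dots,\pi_i\}$ costs at most $3i$ shifts: the window ending at the first occurrence of $\pi_{i+1}$ is already a generator $\sigma_2\in\calC$ involving $\pi_{i+1}$ and two earlier elements $\pi_k,\pi_{k'}$, and any desired 3-cycle $\sigma_1=(\pi_j\ \pi_{j'}\ \pi_{i+1})$ is obtained from $\sigma_2$ and a single inductively available 3-cycle $\sigma_3$ on $\{\pi_1,\dots,\pi_i\}$ via short identities such as $\sigma_1=\sigma_2\sigma_3\sigma_2^2$ or $\sigma_1=\sigma_2\sigma_3$. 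No graphs, paths, or pebbles appear.

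Your conjugation-transport argument along the window graph $\Gamma$ is sound in outline---the case split on whether the third window vertex $\ast$ lies in the current support does always yield an orientation that moves exactly one support element---but the $O(n)$ bound then rests on an unlabeled three-token sliding lemma on a connected graph with at least four vertices. That lemma is true, yet your ``relocate a blocking pebble when necessary'' sketch is not a proof (on a path, for instance, a blocking pebble may have no free neighbour, and one must instead reroute the whole set). Filling this in cleanly, say via a spanning-tree argument, is still a few paragraphs of work, whereas the paper's induction is self-contained and yields the explicit constant $3n$ per 3-cycle. Your approach buys a more geometric picture and would transfer to generators that are arbitrary 3-cycles not arising from a single sequence $W$; the paper's approach buys brevity and no auxiliary lemma.
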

\begin{proof}
Let $\mu\colon V\to \{1,\dots,m\}$ be the function mapping each $v\in V$ to the minimum index $\mu(v)$ such that $w_{\mu(v)}=v$. Let $\pi=[\pi_1\ \dots\ \pi_n]$ be the permutation of $V$ such that the sequence $(\mu(\pi_1),\dots,\mu(\pi_n))$ is monotonically increasing.

We will prove by induction on $i$ that $\calC$ can generate any 3-cycle on $\{\pi_1,\dots,\pi_i\}$ in at most $3i$ shifts. Assume this claim to be true up to a certain $i<n$, and let us prove it for $i+1$. Let $\mathcal T=\{(\pi_{j}\ \pi_{j'}\ \pi_{i+1})\mid 1\leq j<j'\leq i\}$, and note that it suffices to prove that $\calC$ generates all 3-cycles in $\mathcal T$, because the 3-cycles on $\{\pi_1,\dots,\pi_i\}$ are already accounted for by the inductive hypothesis.

So, fix one such 3-cycle $\sigma_1=(\pi_{j}\ \pi_{j'}\ \pi_{i+1})\in \mathcal T$, and observe that $\calC$ already contains a 3-cycle in $\mathcal T$, namely $\sigma_2=(w_{\mu(\pi_{i+1})-2}\ w_{\mu(\pi_{i+1})-1}\ w_{\mu(\pi_{i+1})})$. Indeed, we have $w_{\mu(\pi_{i+1})}=\pi_{i+1}$, and, by the minimality of $\mu$, there exist two distinct indices $k,k'\in \{1,\dots, i\}$ such that $w_{\mu(\pi_{i+1})-2}=\pi_k$ and $w_{\mu(\pi_{i+1})-1}=\pi_{k'}$.

If $\{j,j'\}=\{k,k'\}$, then $\sigma_1=\sigma_2$, and we are done. If $\{j,j'\}$ and $\{k,k'\}$ are disjoint, consider the 3-cycle $\sigma_3=(\pi_j\ \pi_{j'}\ \pi_{k})$, which, by the inductive hypothesis, can be generated by $\calC$ in at most $3i$ shifts. We have $\sigma_1=\sigma_2\sigma_3\sigma_2\sigma_2$, and so $\calC$ can generate $\sigma_1$ in at most $3i+3=3(i+1)$ shifts.

Otherwise, $\{j,j'\}$ and $\{k,k'\}$ intersect in exactly one element, which we may assume to be $j'=k'$, without loss of generality. In this case, $\sigma_1=\sigma_2\sigma_3$, where $\sigma_3$ is defined as above. So, $\calC$ can generate $\sigma_1$ in at most $3i+1<3(i+1)$ shifts.

By taking $i=n$, we conclude that $\calC$ can generate any 3-cycle on $V$ in at most $3n=O(n)$ shifts, implying that it can generate any even permutation of $V$ in $O(n^3)$ shifts, due to \cref{p:basic}.3.\qed
\end{proof}

\end{document}